\definecolor{darkblue}{rgb}{0.2,0.2,0.71}
\definecolor{shadecolor}{rgb}{0.95, 0.95, 0.86}
\definecolor{darkgreen}{rgb}{0.2, 0.5,  0}
\newcommand{\Ker}{\mathrm{Ker}}
\def \Im { {\rm Im}}
\def\&{\vspace{-5pt}&}
\def\Tr{ {\rm Tr}}
\def\C{\mathbb C}
\def\tr{{\rm tr}}
\def \HH {H}
\def \TT{\mathbf T}
\def\gl{{\rm gl}}
\def\rc { {\rho^\vee }}
\def \eqref#1{(\ref{#1})}
\def \& {&\hspace{-10pt}}
\def \Etheta{ E_\theta }
\def \Ftheta{ E_{-\theta}}
\def \wt{\widetilde}
\newcommand{\bt}{\beta}
\renewcommand{\d}{\mathrm d}
\newcommand{\pa}{\partial}
\newcommand{\p}{\partial}
\newcommand{\bdt}{{\bf t}}
\newcommand{\bdT}{{\bf T}}
\newcommand{\bdzero}{{\bf 0}}
\newcommand{\br}{{\mathbb R}}
\newcommand{\ERR}{\mathcal{R}}
\newcommand{\nn}{\nonumber}
\newcommand{\g}{\mathfrak{g}}
\newcommand{\fb}{\mathfrak{b}}
\newcommand{\fn}{\mathfrak{n}}
\newcommand{\s}{s}
\newtheorem{theorem}{Theorem}[subsection]
\newtheorem{example}[theorem]{Example}
\newtheorem{exercise}[theorem]{Exercise}
\newtheorem{lemma}[theorem]{Lemma}
\newtheorem{remark}[theorem]{Remark}
\newtheorem{proposition}[theorem]{Proposition} 
\newtheorem{corollary}[theorem]{Corollary} 
\newtheorem{definition}[theorem]{Definition}
\def\le{\left}
\def\ri{\right}
\def\ds{\displaystyle}
\def\V {\mathcal V}
\def\res{\mathop{\mathrm {res}}\limits_}
\def\bt{\begin{theorem}}
\def\et{\end{theorem}}
\def\bc{\begin{corollary}}
\def\ec{\end{corollary}}
\def\bx{\begin{example}\small}
\def\ex{\end{example}}
\def\bxr{\begin{exercise}\small}
\def\exr{\end{exercise}}
\def\bl{\begin{lemma}}
\def\el{\end{lemma}}
\def\bd{\begin{definition}}
\def\ed{\end{definition}}
\def\bp{\begin{proposition}}
\def\ep{\end{proposition}}
\def\br{\begin{remark}}
\def\er{\end{remark}}
\def\be{\begin{equation}}
\def\ee{\end{equation}}
\def\&{\hspace{-15pt}&}
\def\bea{\begin{eqnarray}}
\def\eea{\end{eqnarray}}
\def \pa{\partial}
\newcommand{\CC}{\mathbb{C}}
\def\L{\mathcal L}
\newcommand{\ZZ}{\mathbb{Z}}
\def\l{\lambda  }
\def\1{{\bf 1}}
\newcommand{\h}{\mathfrak{h}}
\def\QED {\hfill $\square$\par\vskip 10pt \noindent }
\newcommand{\ad}{\mathrm{ad}}
\begin{document}
\title{Simple Lie algebras, Drinfeld--Sokolov hierarchies, and multi-point correlation functions}
\author{Marco Bertola, Boris Dubrovin, Di Yang}
\date{}
\maketitle

\begin{abstract}
For a simple Lie algebra $\g$, we derive a simple algorithm for computing logarithmic derivatives of 
tau-functions of Drinfeld--Sokolov hierarchy of $\g$-type in terms of $\g$-valued resolvents.
We show, for the topological solution to the lowest-weight-gauge Drinfeld--Sokolov 
hierarchy of $\g$-type, the resolvents 
evaluated at zero satisfy the \textit{topological ODE}.
\end{abstract}


\tableofcontents

\setcounter{equation}{0}
\setcounter{theorem}{0}
\section{Introduction}\label{intro-s}
\setcounter{equation}{0}
\setcounter{theorem}{0}
\subsection{Simple Lie algebra and Drinfeld--Sokolov hierarchy}
Let~$\g$ be a simple Lie algebra over $\mathbb{C}$ of rank $n$, with the Lie bracket 
denoted by $[\cdot,\cdot].$ Let $\ad:\,\g\rightarrow \gl(\g)$ be the adjoint representation of $\g$. 
We denote by $h,\,h^\vee$ the Coxeter and dual Coxeter numbers~\cite{Kac} of~$\g$, and 
$m_1=1<m_2\leq \cdots\leq m_{n-1}<m_n=h-1$ the exponents.
Denote $(\cdot|\cdot): \g \times\g\rightarrow \mathbb{C}$ the {\it  normalized} Cartan--Killing \cite{Cartan} form
\be\label{norm-ck-fin}
(x|y):=\frac{1}{2h^\vee} {\tr\bigl(\ad_x \ad_y\bigr)},\qquad \forall\,x,y\in \g.
\ee

Fix a Cartan subalgebra  $\h \subset \g$,  and let $\triangle\subset \h^*$ be the root system. We choose a set of simple roots $\Pi=\{\alpha_1,\dots,\alpha_n\}\subset\h^*$.  
Then $\g$ has the root space decomposition
$$
\g=\h\oplus \bigoplus_{\alpha\in\triangle} \g_{\alpha}.
$$
For any $\alpha\in\triangle,$ denote by $H_{\alpha}$ the unique element in $\h$ such that $(H_\alpha | X) = \alpha(X), \ \ \forall  \, X\in \h$. 
The normalized Cartan--Killing form induces naturally a non-degenerate bilinear form on $\h^*:$
$$
(\alpha|\beta)=(H_{\alpha}| H_{\beta}),\quad \forall \, \alpha,\beta\in \h^*.
$$
Denote by $E_i\in\g_{\alpha_i},\,F_i\in\g_{-\alpha_i}$, $H_i={2H_{\alpha_i}}/{(\alpha_i|\alpha_i)}$ the Weyl generators of $\g$. They  satisfy
$$
[E_i,F_j]=H_i  \delta_{ij}, \quad [H_i,E_j]=A_{ij} E_j,\quad [H_i,F_j]=-A_{ij} F_j, 
$$
where $(A_{ij})$ denotes the Cartan matrix associated to $(\g, \Pi)$, and $\delta_{ij}$ is the Kronecker delta. Here and in what follows, free Latin indices take the integer values from~$1$ to~$n$ unless otherwise indicated.

Let $\theta$ be the highest root w.r.t. $\Pi$; recall that  $(\theta|\theta)=2.$ 
We choose  $ \Ftheta  \in \g_{-\theta},\, \Etheta \in \g_\theta,$ 
normalized by the conditions $(\Etheta  |  \Ftheta  ) = 1$ and $\omega( \Ftheta  ) = -\Etheta $, where $\omega : \g\rightarrow \g $ is the Chevalley involution. Let 
\be \label{def-I+}
I_+ := \sum_{i=1}^n  E_i
\ee
be a principal nilpotent element of $\g$. Define
\be \label{semi-DS}
\Lambda=I_+ + \lambda \Ftheta  .
\ee

Denote by $L(\g)=\g \otimes \CC[\lambda, \lambda^{-1}]$ the loop algebra of $\g$. 
The Lie bracket $[\cdot,\cdot]$ and the Cartan--Killing form $(\cdot|\cdot)$ extend naturally to $L(\g)$.
We have
\be
L(\g)= \Ker \, \ad_\Lambda \oplus \,  {\rm Im} \, \ad_\Lambda, \qquad \Ker \, \ad_\Lambda \perp \,  {\rm Im} \, \ad_\Lambda.
\ee
Recall that the {\it  principal gradation} on $L(\g)$ is defined by  
$$
\deg \lambda = h,\quad \deg E_i=-\deg F_i=1.
$$
Observe that
$$
\deg\Lambda=1.
$$
This gradation is of course also defined on $\g=\g\otimes 1$. 
With the principal gradation, the loop algebra $L(\g)$ and the simple Lie algebra $\g$ decompose
 into direct sums of homogeneous subspaces $L(\g)^j,\, \g^j, ~ j\in \mathbb{Z}:$
$$
L(\g)=\bigoplus_{j\in\mathbb{Z}} L(\g)^j,\qquad \g=\bigoplus_{j=-(h-1)}^{h-1} \, \g^j.
$$
We will denote the projection onto the nonnegative subspace by $(\bullet)^+: L(\g)\to \sum_{j\geq 0} L(\g)^j,$ and onto the negative subspace by $(\bullet)^-.$
\noindent It is known \cite{Kac1978} that $\Ker \, \ad_\Lambda \subset L(\g)$ admits the following decomposition
\bea
&& 
\Ker \, \ad_\Lambda=\bigoplus_{j \in E} \mathbb{C} \Lambda_j, \quad \Lambda_j\in L(\g)^j,\,j\in E, \nn \\
&& [\Lambda_i,  \Lambda_j ]=0,  \quad \forall \,i,j\in E. \nn
\eea
Here, $E:=\bigsqcup_{i=1}^n (m_i+h \mathbb{Z})$. 
The meaning of the symbol $\bigsqcup$ here  is that of ``disjoint union": this means that  if the exponents are distinct then~$\bigsqcup$ denotes the 
ordinary union, but if an element appears in more than one set, it is actually considered a new element. This is relevant  only for the case of the Lie algebra of type  $D_{n}$ with even $n=2k$: in this case  $m_{n/2+1}, m_{n/2+1}+h, \dots$ should be written as $m_{n/2+1}', (m_{n/2+1}+h)', \dots$ because, as integers, 
$m_{n/2+1}=m_{n/2}$.

We choose normalizations of~$\Lambda_j$, $j\in E$ satisfying
\bea
&& \Lambda_{m_a+kh}=\Lambda_{m_a}  \lambda^k,\quad k\in \ZZ, \label{norm-Lambda-1}\\
&& \bigl(\Lambda_{m_a}|\Lambda_{m_b}\bigr)=h \eta_{ab}\lambda. \label{norm-Lambda-2}
\eea
Here and below, 
\begin{equation}\label{eta}
\eta_{ab}: = \delta_{a+b, n+1}.
\end{equation}
Since $\Lambda\in L(\g)^1$, we fix the normalization of~$\Lambda_1$ such that
$$\Lambda_1=\Lambda.$$
It is useful to notice that $\Lambda_{m_a}$, $a=1,\dots,n$ have the form~\cite{Kostant}
$$
\Lambda_{m_a}=  L_{m_a} +  \l  \, K_{m_a-h} , \qquad L_{m_a} \in \g^{m_a} , ~ K_{m_a-h} \in \g^{m_a-h}, ~L_{m_a}\neq 0,\,K_{m_a-h}\neq 0. 
$$

In~\cite{DS}, Drinfeld and Sokolov associate to~$\g$ an integrable hierarchy of Hamiltonian evolutionary PDEs, 
known as the Drinfeld--Sokolov (DS) hierarchy of $\g$-type. Let us briefly review their construction in the form suitable for subsequent considerations. 
Denote by $\fb=\g^{\leq 0}$ a Borel subalgebra of~$\g$, and $\fn=\g^{<0}$ a nilpotent subalgebra. Let 
\be  \label{Lax}
\mathcal{L}=\p_x+ \Lambda + q(x),\quad q(x)\in \fb.
\ee

\bd\label{resol-defi}
The basic resolvents~$R_a$, $a=1,\dots,n$ of~$\L$ are defined as the unique solutions to
\bea
&\& [\L,R_a] \,=\, 0, \quad R_a \in  \mathcal{A}^q \otimes  \g((\lambda^{-1})), \label{basicdef1}\\
&\& R_a(\lambda;q,q_x,\cdots) \,=\, \Lambda_{m_a} + \mbox{ lower order terms w.r.t. } \deg , \label{basicdef2} \\
&\& \bigl(R_a(\lambda;q,q_x,\cdots)\, | \,R_b(\lambda;q,q_x,\cdots)\bigr) \,=\, h \, \eta_{ab}\, \lambda \label{boundaryr}
\eea
(here and below, $\mathcal{A}^q$ denotes the ring of differential polynomials in $q$, namely, an element of 
$\mathcal{A}^q$ is a polynomial 
in the entries of $q,q_x,q_{2x},\cdots$), together with the requirements that $R_a$ are homogeneous of degree $m_a$ with respect to the extended principal 
gradation defined by further assigning degrees to entries of~$q$ so that $q$ is homogeneous of degree~1.
\ed
\noindent Existence and uniqueness of the basic resolvents will be shown in Proposition~\ref{boundary}. 
Note that \eqref{boundaryr} can be alternatively replaced by the no-integration constant condition 
$R_a(\lambda;0,0,\cdots)=\Lambda_{m_a}$, which gives rise to a different algorithm of computing~$R_a$. 

The DS flows for the $\mathfrak b$-valued function $q=q(x,{\bf T})$, ${\bf T}=(T^a_k)^{a=1,\dots,n}_{k\geq 0}$ are 
an infinite set of compatible 
evolutionary PDEs of the form 	
\be\label{pre-DS}
\frac{\p \L}{\p T^a_k}= \le[\left(\lambda^k R_a\right)_+ \,,\, \L\ri], \quad k\geq 0.
\ee
The notation  $(\bullet)_+$ stands for the polynomial part of the expression  in the variable $\lambda$ 
(similarly, $(\bullet)_-$ will stand for Laurent tail in the variable ~$\lambda$). 
To see that these flows are well defined, we note that the property $[\L,R_a]=0$ implies that 
\be
\le[\left(\lambda^k R_a\right)_+ \,,\, \L\ri] = 
\p_x \le(\left(\lambda^k R_a\right)_-\ri) + \le[ \Lambda, \left(\lambda^k R_a\right)_-\ri] + \le[q,\left(\lambda^k R_a\right)_-\ri].
\ee
Then, observing that the RHS contains 
only non-positive powers in~$\lambda$ (here~\eqref{semi-DS} is used), and that the LHS contains 
only non-negative powers in~$\lambda$, we find that $\le[\left(\lambda^k R_a\right)_+ , \L\ri]$ takes value in $\g\otimes \lambda^0$.
Furthermore   this contribution can only come from the term 
$ \le[ \lambda E_{-\theta}, \left(\lambda^k R_a\right)_- \ri]$ (here \eqref{semi-DS} is used again): 
recalling that  $E_{-\theta}$ has the principal degree $-(h-1)$, we conclude that 
$\le[ \lambda E_{-\theta}, \left(\lambda^k R_a\right)_- \ri]\in \fb\otimes \lambda^0$.
An important property of these flows is that they pairwise commute \cite{DS}; they  
form the {\it pre-DS hierarchy}.

Consider transformations of the dependent variable $q(x)\mapsto \tilde q(x)$ of the pre-DS hierarchy induced by gauge transformations of the form
\be\label{gauge1}
{\mathcal L}=\p_x+\Lambda+q(x) \quad \mapsto \quad 
\widetilde{\mathcal L}= e^{\ad_{N(x)}} {\mathcal L}=\p_x+\Lambda+\tilde q(x)
\ee
for arbitrary $\fn$-valued smooth functions~$N(x)$.
A crucial point of the Drinfeld--Sokolov construction is the following statement.

\bl
The gauge transformations \eqref{gauge1} are symmetries of the pre-DS flows of \eqref{pre-DS}. In particular, they map solutions to solutions.
\el

In our approach the proof of this simple but important statement easily follows by observing
that the basic resolvents $\widetilde R_a$ of the gauge-transformed operator $\widetilde{\mathcal L}$ 
satisfy
\be\label{invres}
\widetilde R_a(\lambda; \tilde q, \tilde q_x, \cdots)=e^{\ad_{N(x)}}  R_a(\lambda; q, q_x, \cdots), \quad a=1, \dots, n.
\ee

The DS hierarchy is obtained from \eqref{pre-DS} by considering suitably chosen \textit{gauge invariant} functions $q^{\rm can}$ (see below for more details).

\setcounter{equation}{0}
\setcounter{theorem}{0}
\subsection{From resolvents to tau-function}

We start from defining tau-functions of an arbitrary solution $q(x,{\bf T})$ of the pre-DS hierarchy. Then we verify its independence from the choice of the gauge with respect to the transformations of the form \eqref{gauge1}.

\bd \label{tau-symm-def}
Define a sequence of functions $\Omega_{a,k;b,\ell}=\Omega_{a,k; b,\ell}(q, q_x, \cdots)\in{\mathcal A}^q$, $k,\ell,\geq 0$ by means of the generating function expression below
\be\label{tau-symm-eq}
\sum_{k,\ell\geq 0} \frac{\Omega_{a,k;b,\ell}}{\lambda^{k+1} \mu^{\ell+1}} =  \frac{ \le(R_a (\lambda) \, | \, R_b(\mu)\ri)}{(\lambda-\mu)^2} -
\eta_{ab} \frac{m_a  \lambda + m_b \mu}{(\lambda-\mu)^2}.
\ee
We call $\Omega_{a,k;b,\ell}$ the \textit{two-point correlation functions}.
\ed

\bl\label{tau-symmetry-def} 
The two-point correlation functions $\Omega_{a,k;b,\ell}$ satisfy the following properties
\bea
& \Omega_{a,k;b,\ell} \in \mathcal{A}^q,\qquad \Omega_{a,k;b,\ell}  =\Omega_{b,\ell;a,k},\quad \forall \, a,b,~ \forall\, k,\ell\geq 0, \label{symm1}\\
& \p_{T^c_m} \Omega_{a,k;b,\ell}   = \p_{T^a_k} \Omega_{b,\ell;c,m}= 
\p_{T^b_\ell} \Omega_{c,m;a,k}, \quad \forall \, a,b,c,~\forall\,k,\ell,m\geq 0. \label{symm2}
\eea
\el

\bl \label{tau-symm-def-xt}
For an arbitrary solution $q(x,\bdT)$ to \eqref{pre-DS},  
there exists $\tau=\tau(x,\bdT)$ such that
\bea 
&& \hspace{-2mm} \frac{\p^2 \log \tau}{\p T^a_k \p T^b_\ell} 
= \Omega_{a,k;b,\ell} \left(q(x,{\bf T}), q_x(x, {\bf T}), \cdots\right), \label{two-point-tau} \\
&&\hspace{-2mm}  \frac{\p \tau}{\p x} = -  \frac{\p \tau}{\p T^1_0}. \label{xt10}
\eea
\el
\noindent The proofs are provided later in the paper.

In view of \eqref{xt10} we will henceforth identify~$x$ with~$-T^1_0$ for~$\tau(x,\bdT)$. 
So we will use the short notation
$\tau=\tau(\bdT)$. Note that 
the scalar function~$\tau(\bdT)$ advocated for in Lemma~\ref{tau-symm-def-xt} is uniquely 
determined by the solution~$q(x,\bdT)$ only up to a factor of the form
\be\label{gauge-factor}
\exp\Biggl(d_0+\sum_{a=1}^n \sum_{k\geq 0} d_{a,k} T^a_k\Biggr),\qquad d_0,\,d_{a,k} \mbox{ arbitrary constants}.
\ee
\bd \label{our-def-tau}
We call~$\tau(\bdT)$ the {\rm tau-function} of the solution~$q(x,{\bf T})$ of the pre-DS hierarchy.
\ed

For related aspects on tau-functions, see for example \cite{BDY1} \cite{CW1} \cite{Dickey2} \cite{Du1} \cite{Du3} \cite{DZ-norm} \cite{FGM} 
\cite{G2} \cite{GM} \cite{HM} \cite{HMG} \cite{JMU} \cite{KW} \cite{Wu}.

\bd For an arbitrary solution to the pre-DS hierarchy, let~$\tau(\TT)$ be a tau-function of this solution in the sense of Definition \ref{our-def-tau}.  
The {\bf $N$-point correlation functions} of~$\tau(\bdT)$ are defined by
\bea\label{corf}
&\& \langle\langle\tau_{a_1 k_1} \cdots \tau_{a_N k_N}\rangle\rangle^{DS}=\frac{\p^N \log \tau}{\p T^{a_1}_{k_1}\dots \p 
T^{a_N}_{k_N}},
\quad k_1,\dots,k_N\geq 0,\,N\geq 1.
\eea
\ed

From \eqref{invres} it easily follows the following lemma.
\bl \label{tau-gauge-inv-lemma}
The tau-function of a solution to the pre-DS hierarchy is invariant, 
up to a factor of the form~\eqref{gauge-factor}, 
with respect to the gauge transformations~\eqref{gauge1}.
\el
\noindent Thus~$\tau(\bdT)$ will also be called tau-function of the solution~$q^{\rm can}$ of the DS hierarchy corresponding to a gauge-fixed Lax
operator. The usual procedure~\cite{DS} to fix the gauge is 
by choosing a subspace ${\mathcal V}\subset\mathfrak b$ transversal 
to the adjoint action of the nilpotent subgroup 
so that~$q^{\rm can}(x)$ restricts to a ${\mathcal V}$-valued function (see below).

\setcounter{equation}{0}
\setcounter{theorem}{0}
\subsection{Main results}

For any $a=1, \dots, n$ introduce the following differential operator depending on a parameter~$\lambda$
\be \label{nablaa}
\nabla_a(\lambda)=\sum_{k\geq 0}  \frac{\p_{T^a_k}}{\lambda^{k+1}}.
\ee

For a given $N\geq 1$ and a collection of integers $a_1,\dots,a_N\in \{1,\dots,n\}$, we define the following  
generating series of $N$-point correlations functions by
\be
F_{a_1,\dots,a_N}(\lambda_1,\dots,\lambda_N;\TT)
=\nabla_{a_1}(\lambda_1)\cdots \nabla_{a_N}(\lambda_N) \, \log\tau(\TT).
\ee
Observe that, for $N\geq 2$ the correlation functions \eqref{corf} depend only on the solution $q(x, {\bf T})$ of the pre-DS hierarchy.
Our goal is to derive an explicit expression for these generating functions for $N\geq 2$ in terms of the basic resolvents defined above.

For any $N\geq 2$ define a cyclic-symmetric $N$-linear form $B: \g\times\cdots \times\g\rightarrow \mathbb{C}$ by
\be
\label{Kn}
B(x_1,\dots,x_N)=\tr \le(\ad_{x_1} \dots  \ad_{x_N}\ri),\quad \forall\,x_1,\dots,x_N\in \g.
\ee
The normalized 
Cartan--Killing form (see~\eqref{norm-ck-fin}) and $B$ are related by $B(x,y)=2h^\vee(x|y)$.

\bt \label{main thm} 
For an arbitrary solution $q^{\rm can}(\bdT)$ to the DS-hierarchy,
let $\tau(\bdT)$ be a tau-function of this solution. 
Then $\forall\,N\geq 2$, we have
\bea
&& F_{a_1,\dots,a_N}(\lambda_1,\dots,\lambda_N;\bdT) = - \frac {1} {2 \, N \, h^\vee} \sum_{s \in S_N} 
\frac {B\le(R_{a_{s_1}}^{\rm can}(\lambda_{s_1}; \bdT),  \dots , R_{a_{s_N}}^{\rm can}(\lambda_{s_N};\bdT) \ri) }{\prod _{j=1}^{N} (\lambda_{s_j}- \lambda_{s_{j+1}})}\nn\\
&& \qquad \qquad \qquad\qquad\qquad\qquad\qquad\qquad 
-\delta_{N2} \, \eta_{a_1 a_2}  \frac {m_{a_1} \,  \lambda_1 + m_{a_2} \, \lambda_2}{(\lambda_1-\lambda_2)^2 },  \label{24}
\eea
where $R_a^{\rm can}(\lambda),\,a=1,\dots,n$ are the basic resolvents of $\L^{\rm can}:=\p_x+ \Lambda(\lambda)+ q^{\rm can}$, and it is understood that 
$s_{N+1} =s_1$.
In particular, $\forall \,N\geq 2, \, \forall\,a_1,\dots,a_N\in\{1,\dots,n\},$ we have
$F_{a_1,\dots,a_N}(\lambda_1,\dots,\lambda_N;\bdT)\in \mathcal{A}^{q^{\rm can}} [[\lambda_1^{-1}, \dots, \lambda_N^{-1}]].$
\et

\paragraph{The partition function.} We now  consider a particular
tau-function that we shall call the {\it partition function}: it will be denoted by~$Z(\bdt)$, 
where the new time variables~$\bdt$ differ from the original~${\bf T}$ by a rescaling (see eq.~\eqref{norm-tT}). 
This particular tau-function is uniquely specified up to a multiplicative constant by the following {\it string equation}:
\be\label{string}
\sum_{a=1}^n \sum_{k\geq 0} t^a_{k+1} \frac{\p Z}{\p t^a_k}+
\frac12 \sum_{a,b=1}^n \eta_{ab} t^{a}_0  t^b_0 Z = \frac{\p Z}{\p t^1_0}
\ee
(see details in Section~\ref{topo} below).
Here, the time variables $t^a_k$ and $T^a_k$ are related by
\be\label{norm-tT}
\frac{\partial}{\partial t^a_k}  =   {c_{a,k}}\frac{\partial}{\partial T^a_k},\quad c_{a,k}=\frac{ (-1)^{k}} { \sqrt{-h}^{m_a+hk+1} \, (\frac{m_a} h)_{k+1}}, \qquad  k\geq 0,
\ee
where $(\cdot)_\ell$ denotes the Pochhammer symbol, i.e.,
$(y)_\ell:=y(y+1)\cdots(y+\ell-1)$.

\bt \label{topo-thm} Let the subspace $\V:=\Ker\, \ad_{I_-}\subset \g$ be the
lowest weight gauge (see eq.~\eqref{def-I-} for the definition of $I_-$), and
$\L^{\rm can}$ the associated Lax operator.  Let $R^{\rm can}_a,\,a=1,\dots,n$ be the basic resolvents of $\L^{\rm can}.$ 
For the partition function $Z$,  define $M_a(\lambda)= \lambda^{-\frac{m_a} h} R^{\rm can}_a(\lambda;\bdt=\bdzero)$. Then $\forall\,a\in\{1,\dots,n\}$, 
$M_a(\lambda)$ satisfies the topological ODE of $\g$-type
\be\label{topoode}
M'= \kappa \, [M,\Lambda],\qquad \kappa=\le(\sqrt{-h}\ri)^{-h}, ~ {}' :=  \frac {\d}{\d \l}.
\ee
\et
See~\cite{BDY2} for the definition and properties of the topological ODE of $\g$-type.
Observe that, as $\lambda\to\infty,$ the solutions $M_a(\lambda)$ admit the expansions
$$
M_a=\lambda^{-\frac{m_a}h} \Bigl[ \Lambda_{m_a}+ \mbox{lower degree terms w.r.t.~} \deg\Bigr].
$$
Thus, $M_a$ coincide with the basis of regular solutions to the topological ODE constructed in~\cite{BDY2}.

\setcounter{equation}{0}
\setcounter{theorem}{0}
\subsection{Applications to the FJRW theory}  
Let $f:\mathbb{C}^m\rightarrow \CC$ be a quasi-homogeneous polynomial, i.e., there exist positive integers $d,n_1,\dots,n_m$, s.t.
$$f\le(z^{n_1}x_1,\dots,z^{n_m}x_m\ri)=z^d f(x_1,\dots,x_m), \quad \forall\, z\in \mathbb C.$$  
The weight of~$x_i$ is defined to be $q_i=\frac{n_i}d$, $i=1,\dots,m$.
In general the gradient of $f$ vanishes at the origin and hence 
the zero level-set $f^{-1}(0)$ is a singular variety and 
defines a ``singularity'' in the sense of singularity theory~\cite{AVG}. 
The function~$f$ is called \textit{non-degenerate} if the choice of weights~$q_i$ is unique 
and $x=\bdzero$ is the only singularity of~$f$. 
Let $G_f$ (or $G_{max}$) denote the maximal diagonal symmetry group of~$f$, 
which is the subgroup of~${\rm Aut}(f)$ consisting of diagonal matrices~$\gamma$ 
such that 
$f(\gamma x)=f(x)$. It is easy to see that the matrix 
$$J={\rm diag} \left(e^{2\pi i q_1},\dots,e^{2\pi i q_m}\right) \in G_f.$$
Let~$G$ be a subgroup of~$G_f$ containing $\langle J \rangle$. 
Let~$n$ 
be the dimension of the Fan--Jarvis--Ruan cohomology ring~\cite{FJR} associated to~$(f,G)$. 
Fan--Jarvis--Ruan associate with the pair $(f,G)$ a certain \textit{generalized Witten class}, 
called the Fan--Jarvis--Ruan--Witten class
$$\Lambda_{g,N}^{f,G}(a_1,\dots,a_N)\in H^*(\overline{\mathcal{M}}_{g,N}),\qquad a_1,\dots,a_N \in \{1,\dots,n\},$$
such that incorporation of these cohomological classes to 
$\overline{\mathcal{M}}_{g,N}$ gives rise to a cohomological filed theory \cite{Ma99, FJR} (cf.~also~\cite{Du1}, \cite{Du2}, \cite{Saito}). 
The FJRW invariants are defined by
$$\langle\tau_{a_1 k_1}\cdots\tau_{a_N k_N} \rangle_g^{f,G}=\int_{\overline{\mathcal{M}}_{g,N}} \psi_1^{k_1}\cdots\psi_N^{k_N} \, \Lambda_{g,N}^{f,G}(a_1,\dots,a_N),$$
where $\psi_i,\,i=1,\dots,N$ are $\psi$-classes. 
\bd
The partition function $Z^{f,G}$ of FJRW invariants is defined by
$$Z^{f,G}({\bf t})=\exp\le(\sum_{g,N\geq 0} \frac{1}{N!} \sum_{a_1,\dots,a_N=1}^n \sum_{k_1,\dots,k_N\geq 0}   \langle\tau_{a_1 k_1} \dots \tau_{a_N k_N}\rangle_g^{f,G}   \, t^{a_1}_{k_1}\cdots t^{a_N}_{k_N} \ri).$$
\ed
Now we consider an important subclass of singularities, called \textit{simple singularities}. 
They are classified by the ADE Dynkin diagrams \cite{A72,A76}. 
In particular, we consider
$$
A_k:~ f= x^{k+1},\quad k\geq 1; \qquad 
D_k:~ f=x^{k-1}+x\, y^2, \quad k\geq 4; 
$$
$$ 
E_6: ~ f=x^3+y^4; \qquad E_7: ~ f=x^3+x\,y^3; \qquad E_8:~f=x^3+y^5.
$$
We are also interested in the mirror singularity of $D_k$  \cite{FJR}, denoted by $D_k^T$:
$$D_k^T: \qquad f=x^{k-1}\,y+y^2,\quad k\geq 4.$$
The maximal diagonal symmetry groups $G_f$ of the above polynomials will be denoted by $G_{A_k}$, $G_{D_k}$, $G_{D_k^T}$ and $G_{E_n}$, $n=6, \, 7, \, 8$.

~~~

\noindent \textbf{Theorem-ADE} (\cite{FJR,FFJMR}). \textit{The following statements hold true
\begin{itemize}
\item[A.] The partition function $Z^{A_n,G}({\bf t}),\,n\geq 1$ with $G=\langle J \rangle=G_{A_n}$ is a particular tau-function of the Drinfeld--Sokolov hierarchy of $A_n$-type satisfying the string equation \eqref{string}.
\item[D.] The partition function $Z^{D_n,G}({\bf t}),\, n\geq 4$ with $n$ \textbf{even} and $G=\langle J \rangle$ is a particular tau-function of the DS hierarchy of $D_{n}$-type satisfying \eqref{string}.
\item[D'.] The partition function $Z^{D_k,G}({\bf t}),\, k\geq 4$ with $G=G_{D_k}$ is a particular tau-function of the DS hierarchy of $A_{2k-3}$-type satisfying \eqref{string}.
\item[D''.] The partition function $Z^{D_n^T,G}({\bf t}),\, n\geq 4$ with $G=G_{D_n^T}$ is a particular tau-function of the DS hierarchy of $D_{n}$-type satisfying \eqref{string}.
\item[E.] The partition function $Z^{E_n,G}({\bf t}),\,n=6,7,8,$ with $G=\langle J\rangle=G_{E_n}$ is a particular tau-function of the DS hierarchy of $E_{n}$-type satisfying \eqref{string}.
\end{itemize}
Summarizing, the partition function $Z^{X_k,G_{X_k}}({\bf t})$ with $X=A,D,D^T,$ or  $E$ 
is a particular tau-function of the DS hierarchy of $X_k^T$-type  satisfying \eqref{string}.
}

\medskip

In the case that $f=x^r$ with $G=\langle J \rangle=G_f$,  the FJRW invariants
$\langle\tau_{a_1 k_1}\cdots\tau_{a_N k_N} \rangle_g^{f,G}$ coincide with Witten's $r$-spin correlators.
The statement $A$ of Theorem-ADE justifies Witten's $r$-spin conjecture \cite{W2}, which was first proved by Faber--Shadrin--Zvonkine \cite{FSZ}; 
see ``Theorem $r$-spin''  below. 

For convenience of the reader let us recall some details  in the definition of Witten's $r$-spin correlators. 
For a given $N\geq 1$, let $a_1, \dots, a_N \in \{1,\dots,r\} $ be integers satisfying the following divisibility condition
\begin{equation}\label{divisible}
a_1+\cdots+a_N-N-(2g-2)=m r, \quad m\in\mathbb Z.
\end{equation}
For any smooth algebraic 
curve $C$ of genus $g$ with $N$ marked points $x_1$, \dots, $x_N$ there exists a line bundle ${\mathcal T}$ over $C$ such that
\begin{equation}\label{root}
{\mathcal T}^{\otimes r} =K_C\otimes{\mathcal O}\left((1-a_1)x_1\right)\otimes\cdots \otimes{\mathcal O}\left((1-a_N)x_N\right).
\end{equation}
Here $K_C$ is the canonical class of the curve $C$.
Moreover, there are $r^{2g}$ such line bundles. A choice of such an ``$r$-th root" of the bundle \eqref{root} defines a point in a covering of the moduli space. After a suitable compactification this covering is denoted by
\begin{equation}\label{root1}
p:\overline{\mathcal M}_{g,N}^{1/r}(a_1, \dots, a_N)\to \overline{\mathcal M}_{g,N}.
\end{equation} 
In genus zero, for a point $\left( C, x_1, \dots, x_N, {\mathcal T}\right)$ in the covering space, denote $V=H^1(C, {\mathcal T})$. 
This defines a vector bundle ${\mathcal V}\to \overline{\mathcal M}_{0,N}^{1/r}(a_1, \dots, a_N)$  because the space $V$ has constant dimension thanks to the fact that  $H^0(C, {\mathcal T})$ vanishes. 
Put
$$
c_W(a_1, \dots, a_N):= 
p_* \left( e\left({\mathcal V}^\vee\right) \right)\in H^{2(m-1)}\left( \overline{\mathcal M}_{0,N}\right),
$$
where $e\left({\mathcal V}^\vee\right)$ is the Euler class of the dual vector bundle ${\mathcal V}^\vee$. 
The
$c_W(a_1,\dots,a_N)$ is called the \textit{Witten class}. 
In higher genus, this is not completely correct because $H^0(C, {\mathcal T})$ is only {\it generically} zero and hence the vector bundle is only defined on a generic stratum. 
The Witten class $c_W(a_1,\dots,a_N)$ could still be defined as a particular 
cohomology class in $H^{2(m+g-1)}\left( \overline{\mathcal M}_{g,N}\right)$, but 
 the construction is more involved 
(see e.g. \cite{W2, FSZ, JKV,PV,Polishchuk}).
The $r$-spin intersection numbers are defined by
\begin{equation}\label{rspinn}
\left\langle \tau_{a_1 p_1}\cdots \tau_{a_N p_N} \right\rangle_g^{r-{\rm spin}} := 
\int_{\overline{\mathcal M}_{g,N}} c_W(a_1,\dots, a_N)  \psi_1^{p_1} \cdots \psi_N^{p_N}, \quad 
a_1,\dots,a_N \in \{ 1,\dots,r \}, ~ p_1,\dots,p_N \geq 0.
\end{equation}
 The  numbers $\left\langle \tau_{a_1 p_1} \cdots \tau_{a_N p_N} \right\rangle_g^{r-{\rm spin}}$ are zero unless
\be\label{dim-degree}
\frac{a_1-1}{r}+\cdots+\frac{a_N-1}r + \frac{r-2}r (g-1) + p_1+\cdots+p_N = 3g-3+ N.
\ee
The so-called \textit{Vanishing Axiom} conjectured in~\cite{JKV} and proven in~\cite{PV,Polishchuk} tells that the Witten class vanishes
if any of $a_i$, $i=1,\dots,N$ reaches~$r$. Hence, below, we only consider 
the case of $a_1,\dots,a_N$ belonging to $\{1,\dots, r-1\}$.

For computing Witten's $r$-spin correlators, we use Theorems~\ref{main thm}--\ref{topo-thm} for a particular tau-function along with the following result.

~~

\noindent \textbf{Theorem $r$-spin} (\cite{W2,FSZ}). 
\textit{
The partition function of $r$-spin intersection numbers
$$Z^{r-{\rm spin}}({\bf t}):=\exp\Biggl(\sum_{g,N\geq 0} \frac1{N!} \sum_{a_1,\dots,a_N=1}^n \sum_{k_1,\dots,k_N\geq 0}   
\langle\tau_{a_1 k_1} \cdots \tau_{a_N k_N}\rangle^{r-{\rm spin}}_g  t^{a_1}_{k_1} \cdots t^{a_N}_{k_N} \Biggr)$$
is a particular tau-function of the DS hierarchy of $A_n$-type, $n=r-1$ satisfying \eqref{string}. }

~~~

In \cite{LRZ}, Liu--Ruan--Zhang introduced {\it cohomological field theories with finite symmetry}, associated with simple singularities and certain symmetry groups, and with a $\Gamma$-invariant sector, where $\Gamma$ is the group of automorphisms of the Dynkin digram. These theories are proved to be related to the DS integrable hierarchies associated to the non-simply laced simple Lie algebras.

~~~

\noindent \textbf{Theorem-BCFG} (\cite{LRZ}).  \textit{The partition function of the $\Gamma$-invariant sector of $D_{n+1}^T,A_{2n-1},E_6$ FJRW theory with $G_{max}$ is a particular tau-function of the Drinfeld--Sokolov hierarchy of $B_n,C_n,F_4$-type
satisfying \eqref{string}; the partition function of the $\mathbb{Z}/3\mathbb{Z}$-invariant sector of $(D_4,\langle J\rangle)$ FJRW theory is a particular tau-function of the Drinfeld--Sokolov hierarchy of $G_2$-type satisfying \eqref{string}.}

~~~

Note that the common feature of Theorem-ADE and Theorem-BCFG claims that the partition function of FJRW invariants associated to a simple singularity with a symmetry group (possibly also with an invariant sector) is a tau-function of the DS hierarchy of $\g$-type, where $\g$ is a simple Lie algebra.  We call these numbers the {\it FJRW invariants of $\g$-type}, denoted by
$$\langle\tau_{a_1 k_1}\cdots\tau_{a_N k_N} \rangle_g^{FJRW-\g},\mbox{ or simply by } \langle\tau_{a_1 k_1}\cdots\tau_{a_N k_N} \rangle_g^\g.$$
As before, let $n$ denote the rank of $\g$. For a given $N\geq 1$ and for a collection of integers $a_1,\dots, a_N\in \{1,\dots,n\}$, we define the following generating functions of $N$-point  FJRW invariants of $\g$-type

\be\label{F-DS}
F_{a_1,\dots,a_N}^{FJRW}(\lambda_1,\dots,\lambda_N) : = (\kappa^{\frac1{h+1}} \sqrt{-h})^N 
\sum_{g,k_1,\dots,k_N\geq 0} \prod_{\ell=1}^N 
\frac{(-1)^{k_\ell} \left( \frac{m_{a_\ell}}{h}\right)_{k_\ell+1}}{\left( \kappa^{\frac1{h+1}} \, \lambda_\ell\right)^{\frac{m_{a_\ell}}{h}+k_\ell+1}} \langle\tau_{a_1 k_1} \cdots \tau_{a_N k_N}\rangle^{\g}_g.
\ee
Here $\kappa:=\le(\sqrt{-h}\ri)^{-h}.$ 

Combining the results of Theorems \ref{main thm} and \ref{topo-thm} with the statements of Theorem-ADE and Theorem-BCFG we arrive at the following formula for the FJRW invariants of $\mathfrak g$-type.

\bt \label{N-DS-p-real} 
Let $\g$ be a simple Lie algebra and $n$ the rank of $\g$. Let $M_a=M_a(\lambda)$, $a = 1, \dots, n$ be the generalized Airy resolvents of $\g$-type, which are the unique solutions to
\be \label{topo-g}
M'=  [M,\Lambda],
\ee
subjected to 
$$
M_a(\l)=\lambda^{-\frac{m_a}h} \Bigl[ \Lambda_{m_a}(\l)+ \mbox{lower degree terms w.r.t.~} \deg \Bigr].
$$
Here, $h$ is the Coxeter number and $m_a$ are the exponents of $\mathfrak g$. Then the generating functions \eqref{F-DS} for the $N$-point FJRW invariants of $\g$-type have the following expressions

\bea
&\& \!\!\!\!\! \frac{\d F_a^{FJRW}}{\d \lambda}(\lambda) =- \frac{1}{2\, h^\vee} B\Big(E_{-\theta}, M_a(\l)\Big) + \lambda^{-\frac{h-1}h}\, \delta_{a,n},\quad N=1, \label{N=1-FJR}\\
&\&  \!\!\!\!\!  F_{a_1,\dots,a_N}^{FJRW}(\lambda_1,\dots,\lambda_N) = - \frac 1 {2 N\, h^\vee} \sum_{s \in S_N} 
\frac {B\le(M_{a_{s_1}}(\lambda_{s_1}),\dots, M_{a_{s_N}}(\lambda_{s_N}) \ri) }{\prod _{j=1}^{N} (\lambda_{s_j}- \lambda_{s_{j+1}})} \nn\\
&\&  \qquad\qquad\qquad\qquad\qquad\quad -\delta_{N2} \, \eta_{a_1 a_2}  \frac {\lambda_1^{-\frac {m_{a_1}}h } \lambda_2^{- \frac {m_{a_2}}h} (m_{a_1} \,  \lambda_1 + m_{a_2} \, \lambda_2)}{(\lambda_1-\lambda_2)^2} , \quad N\geq 2. \label{24-FJR}
\eea
\et

 Eqs.\,\eqref{topo-g}--\eqref{24-FJR} are equivalent to the proposed 
formulae in~\cite{BDY2} (eq.~\eqref{previous-version-N} of the current paper). 
For other methods towards computing related invariants, see~\cite{BM} \cite{BY} \cite{BE} 
\cite{Buryak} \cite{CW} \cite{DZ-norm} \cite{DZ1} \cite{G1} \cite{LYZ} \cite{Zhou1}.

In particular, for given integers $r\geq 2$, $N\geq 1$ and a given collection of indices $a_1$, \dots, $a_N$ belonging to $\{1,\dots,r-1\},$ define 
\be\label{r-genera}F^{r-spin}_{a_1,\dots,a_N}(\lambda_1,\dots,\lambda_N)
: = \left(\kappa^{\frac1{r+1}}\,\sqrt{-r}\right)^N\sum_{k_1,\dots,k_N\geq 0}   \prod_{\ell=1}^N \frac{ (-1)^{k_\ell}   \left(\frac{a_\ell}{r}\right)_{k_\ell+1}}{ \left(\kappa^{\frac1{r+1}} \,  \lambda_\ell\right)^{\frac{{a_\ell}}r+k_\ell+1}} \langle\tau_{a_1 k_1} \cdots \tau_{a_N k_N}\rangle^{r{\rm-spin}}.
\ee
Here $\kappa=\le(\sqrt{-r}\ri)^{-r}$.  Note that we have omitted the genus labelling in the notation of correlator, since it can be obtained from the degree-dimension matching \eqref{dim-degree}.

\bt 
\label{r-spin-thm}
Let $n=r-1,\, \g= sl_{n+1} (\CC)$,  $\Lambda=\sum_{i=1}^n E_{i,i+1}+\lambda\, E_{n+1,1}$, and let $M_i=M_i(\lambda)$ be the basis of generalized Airy resolvents of $\g$-type, uniquely determined by the topological ODE 
\be\label{topo-An}
M'= [M,\Lambda],
\ee
subjected to
$$
M_a=\lambda^{-\frac a r} \left[ \Lambda^a+ \mbox{lower degree terms w.r.t. } \deg\right].
$$
Then the $N$-point functions \eqref{r-genera} of $r$-spin intersection numbers have the following expressions
\bea
&& \frac{\d F_a^{r-spin}}{\d \lambda}(\lambda)= -   (M_a)_{1,n+1}(\lambda) + \lambda^{-\frac{r-1}r}\, \delta_{a,n}, \quad N=1, \label{easy-1}\\
&& F_{a_1,\dots,a_N}^{r-spin}(\lambda_1,\dots,\lambda_N) = - \frac {1} N \sum_{s \in S_N} 
\frac {\Tr \le(M_{a_{s_1}}(\lambda_{s_1})  \dots  M_{a_{s_N}}(\lambda_{s_N}) \ri) }{\prod _{j=1}^{N} (\lambda_{s_j}- \lambda_{s_{j+1}})}  \nn\\
&& \qquad\qquad\qquad\qquad\qquad -\delta_{N2} \, \eta_{a_1 a_2}  \frac {\lambda_1^{-\frac {a_1}h } \lambda_2^{- \frac {a_2}h} ({a_1} \,  \lambda_1 + {a_2} \, \lambda_2)}{(\lambda_1-\lambda_2)^2},\quad\qquad N\geq 2. \label{r-spin-N-new}
\eea

\et
 
\begin{example}
[$r=2$] Witten's $2$-spin invariants coincide with intersection numbers of $\psi$-classes over $\overline{\mathcal{M}}_{g,N}$ \cite{W1,Kontsevich,FSZ}.
So Theorem~\ref{r-spin-thm} with the choice $r=2$
recovers the result of~\cite{BDY1,Zhou2}: 
\bea
&& \!\!\!\!\!\!\! {\sum_{g\geq 0}} \sum_{p_1,\dots,p_N\geq 0}  \frac{(2p_1+1)!!\cdots (2p_N+1)!!}{2^{2g-2+N}}\, \int\limits_{\overline{\mathcal{M}}_{g,N}} \psi_1^{p_1}\cdots \psi_N^{p_N}\,  \lambda_1^{-\frac{2p_1+3}{2}}\cdots\lambda_N^{-\frac{2p_N+3}2}\nn\\ 
&& \!\!\!\!\!\!\!  \qquad\qquad = - \frac{1}{N} \sum_{r\in S_{N}}  \frac{\Tr \left(M(\lambda_{r_1})\cdots 
M(\lambda_{r_N})\right)}{\prod_{j=1}^{N}(\lambda_{r_j}-\lambda_{r_{j+1}})} - \delta_{N2}\frac{\lambda_1^{-\frac12} \lambda_2^{-\frac12}(\lambda_1+\lambda_2)}{(\lambda_1-\lambda_2)^2}, \quad N\geq 2, \nn
\eea
where 
$$M=\frac{\lambda^{-\frac12}}{2}\left(
\begin{array}{cc}
- \frac 1 2 \sum_{g=1}^\infty \frac{(6g-5)!!}{96^{g-1}\cdot (g-1)!} \lambda^{-3g+2} & 2 \sum_{g=0}^\infty \frac{(6g-1)!!}{96^g\cdot g!} \lambda^{-3g}\\
\\
-2 \sum_{g=0}^\infty\frac{6g+1}{6g-1} \frac{(6g-1)!!}{96^g\cdot g!} \lambda^{-3g+1} &  
\frac 1 2 \sum_{g=1}^\infty \frac{(6g-5)!!}{96^{g-1}\cdot (g-1)!} 
\lambda^{-3g+2}\\
\end{array}
\right). $$
For $N=1$, it follows easily from \eqref{easy-1} the well-known formula 
$$\langle\tau_{3g-2}\rangle_g=\frac{1}{24^g g!}\quad \mbox{for}\quad g\geq 1.$$
\end{example}

\begin{example}[$r=3$]
We obtain from Theorem~\ref{r-spin-thm} that the only nontrivial one-point correlators have the following explicit expressions
\bea
\&\&
\int_{\overline{\mathcal M}_{3m-2,1}} 
c_W(1)\, \psi_1^{8m-7}=\frac1{6^{6m-4}(m-1)! \left(\frac13\right)_m}, \quad m\geq 1,\nn\\
\&\&
\int_{\overline{\mathcal M}_{3m,1}} c_W(2)\, \psi_1^{8m-2}
=\frac1{6^{6m}m! \left(\frac23\right)_m}, 
\quad m\geq 1.\nn
\eea
For $N\geq 2$, Witten's 3-spin correlators can be computed from the formulae
$$
F_{i_1,\dots,i_N}^{3-spin}(\lambda_1,\dots,\lambda_N) = - \frac {1} N \sum_{s \in S_N} 
\frac {\Tr\le(M_{i_{s_1}}(\lambda_{s_1})  \dots  M_{i_{s_N}}(\lambda_{s_N}) \ri) }{\prod _{j=1}^{N} (\lambda_{s_j}- \lambda_{s_{j+1}})} -\delta_{N2} \, \eta_{i_1 i_2}  \frac {\lambda_1^{-\frac {{i_1}}h } \lambda_2^{- \frac {{i_2}}h} ({i_1} \,  \lambda_1 + {i_2} \, \lambda_2)}{(\lambda_1-\lambda_2)^2}
$$
with explicit formulae of $M_a(\lambda)$ given in Appendix \ref{3spinM}.
\end{example}

\paragraph{Organization of the paper.} 
In Section~\ref{tau-section} we introduce the definition of tau-function
and prove Theorem~\ref{main thm}. 
In Section~\ref{essential} we define the essential series of~$\g$.
In Section~\ref{main proof}, we prove Theorem~\ref{topo-thm}.

\paragraph{Acknowledgements.} 
{We would like to thank the anonymous referee for constructive comments that helped  improve  the paper.}
We wish to thank Yassir Dinar, Daniele Valeri, Chao-Zhong Wu, Youjin Zhang for helpful discussions. 
D.\,Y. is grateful to Youjin Zhang for his advising. 
The work of M.\,B. is in part supported by the RGPIN/261229-2011 grant of the Natural
Sciences and Engineering Research Council of Canada  and by the FQRNT grant ``Matrices Al\'eatoires, Processus Stochastiques et Syst\`emes Int\'egrables" (2013--PR--166790). The work of D.\,Y. was 
initiated when he was a postdoctoral fellow  at SISSA, Trieste; he thanks SISSA for the excellent working conditions.

\setcounter{equation}{0}
\setcounter{theorem}{0}
\section{Tau-function of Drinfeld--Sokolov hierarchy} \label{tau-section}
\setcounter{equation}{0}
\setcounter{theorem}{0}
\subsection{Fundamental lemma}
Let $\g$ be a simple Lie algebra of rank $n,$ $L(\g)$ its loop algebra. Fix $\h$ a Cartan subalgebra of $\g$. 
We denote by $\rc \in \h$ the \textit{Weyl co-vector} of $\g$, which is uniquely determined by the following equations
\be \label{def-rho}
\alpha_i(\rc )=1,\qquad i=1,\dots,n.
\ee
Here $\alpha_i\in\h^*$ are simple roots. We define 
the \textit{principal} grading operator $\mbox{gr}$ on $L(\g)$ by
$$
\mbox{gr} = h \l\frac \d{\d \l} + \ad_{\rc}.
$$
It follows that $\deg a= j\in\mathbb{Z}$ {\it iff} $\mbox{gr}\, a =j\, a$, $\forall\, a\in \L(\g)$. 
We have the decomposition
$$ 
L(\g)=\bigoplus_{j\in\mathbb{Z}} L(\g)^j,\qquad\quad a\in L(\g)^j ~\Leftrightarrow ~ \mbox{gr}\, a =j\, a, \quad j\in \mathbb{Z}.
$$
For any $a\in L(\g)$, we denote its  principal decomposition by
$$a=\sum_{j\in \mathbb{Z}} a^{[j]},\qquad a^{[j]} \in L(\g)^j.$$

The following lemma is elementary but it will be frequently used.
\bl \label{van-Cartan}
Let  $x,y$ be any two elements in $\g=\g\otimes 1$ satisfying
${\rm gr}  \, x = k_1\, x, ~ {\rm gr} \,y = k_2\, y$.
If $k_1+k_2\neq 0$, then we have $(x|y)=0$.   
\el
\begin{proof} Suppose $k_1\neq 0$. By definition, ${\rm gr}\,x= k_1 \,x$ implies $[\rho^\vee, x]= k_1\,x$.  So we have
$$(x|y)= \frac1 {k_1} ( [\rho^\vee,x]\,|\,y)= - \frac1 {k_1} (x\,|\, [\rho^\vee,y])
=-\frac{k_2}{k_1} (x|y)~\Rightarrow~\frac{k_1+k_2}{k_1}(x|y)=0.$$
The lemma is proved.
\end{proof}

\begin{lemma}[fundamental lemma, \cite{DS}] \label{first-lemma} Let $q=q(x)$ be a $\fb$-valued smooth function, where 
$\fb:=\g^{\leq 0}.$ Let $\L=\p_x+\Lambda+ q(x)$.
Then there exists a unique pair $(U,H)$ of the form
\bea
U&=&\sum_{k\geq 1} U^{[-k]} (\lambda; q; q_x,\cdots) \in \mathcal{A}^q \otimes  {\rm Im} \, \ad_\Lambda,
 \label{U-form}\\
H&=&\sum_{j\in E_+} H^{[-j]} (\lambda; q; q_x,\cdots) \in \mathcal{A}^q \otimes {\rm Ker}  \, \ad_\Lambda, 
\label{H-form}
\eea
where ${\rm Im}, \, {\rm Ker}$ are taken in $\g((\lambda^{-1}))$, and $E_+:= \{ j\geq 0 \,|\,j\in E \}$, such that
\be \label{first-dressing}
e^{-\ad_U}\mathcal{L} = \p_x+\Lambda+H. 
\ee
\end{lemma}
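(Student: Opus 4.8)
The plan is to construct the pair $(U,H)$ recursively degree by degree in the principal gradation and then argue uniqueness. First I would write $U = \sum_{k\geq 1} U^{[-k]}$ and $H = \sum_{j\in E_+} H^{[-j]}$ with unknowns $U^{[-k]}\in \g((\lambda^{-1}))$ lying in $\Img\,\ad_\Lambda$ and $H^{[-j]}\in \Ker\,\ad_\Lambda$, and expand the conjugation equation \eqref{first-dressing}. Since $e^{-\ad_U}\L = \p_x - (\ad_U \Lambda) - \ad_U q + \tfrac12 \ad_U^2\Lambda + \dots + \Lambda + q$, collecting terms of principal degree $d$ (note $\deg\Lambda = 1$, $\deg q \le 0$, $\deg \p_x = 0$ in the sense that $q_x$ has the same degree as $q$) gives an equation of the schematic form
\be
[\Lambda, U^{[-d+1]}] + H^{[-d+1]} = (\text{terms involving } U^{[-1]},\dots,U^{[-d+2]},\, q,\, q_x,\dots),
\ee
where the right-hand side is a differential polynomial in $q$ already determined at the previous steps. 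Decomposing $\g((\lambda^{-1})) = \Img\,\ad_\Lambda \oplus \Ker\,\ad_\Lambda$ and using that $\ad_\Lambda$ is invertible on $\Img\,\ad_\Lambda$, one solves: $U^{[-d+1]}$ is the unique preimage under $-\ad_\Lambda$ of the $\Img\,\ad_\Lambda$-component of the right-hand side, and $H^{[-d+1]}$ is its $\Ker\,\ad_\Lambda$-component. This simultaneously establishes existence and uniqueness at each degree, hence globally.

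The top-degree check is the base of the recursion: the degree-$1$ part of \eqref{first-dressing} reads $\Lambda = \Lambda$ (no $U$, no $H$ contribute), and the degree-$0$ part reads $-[\Lambda, U^{[-1]}] + H^{[0]} + q^{[0]} = 0$, so $U^{[-1]}$ and $H^{[0]}$ are fixed by splitting $q^{[0]}$ along $\Img\,\ad_\Lambda \oplus \Ker\,\ad_\Lambda$; one then proceeds downward. I would also note that each $U^{[-k]}$ and $H^{[-j]}$ produced this way is automatically a differential polynomial in $q$ with coefficients that are Laurent polynomials in $\lambda$ of the appropriate degree, since the recursion only involves the bracket, the fixed inverse of $\ad_\Lambda$ on $\Img$, and $\p_x$; and that $H$ is supported on degrees in $E_+$ because $\Ker\,\ad_\Lambda = \bigoplus_{j\in E}\C\Lambda_j$ and the nonnegative-degree part of the recursion only feeds degrees $j \le 0$, i.e. $j \in E_+$ after accounting for $H^{[0]}$ possibly vanishing when $0\notin E$ — here one uses $m_1 = 1$, so the lowest positive element of $E$ is $1$ and $H^{[0]}$ may only be nonzero if forced, which the recursion handles correctly.

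The main obstacle I expect is the bookkeeping needed to verify that the gradation is \emph{bounded above} in the right way so that the recursion is well-posed: one must check that the right-hand side at degree $d$ genuinely involves only $U^{[-k]}$ with $k < d-1$ (equivalently degree $> -d+1$), which follows because every term in the expansion of $e^{-\ad_U}\L$ other than $\Lambda$ and $H^{[-d+1]}$ either contains $q$ (degree $\le 0 < 1$) or contains at least one more factor of $U$ (each of degree $\le -1$) or a $\p_x$ acting without raising degree — so the leading contribution of $U^{[-d+1]}$ to degree $d$ is exactly $-[\Lambda, U^{[-d+1]}]$ and nothing else of that degree depends on it. Once this triangularity is pinned down, the rest is the standard splitting argument via $\g((\lambda^{-1})) = \Img\,\ad_\Lambda \oplus \Ker\,\ad_\Lambda$. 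I would present the argument as an induction on $d$, carrying as inductive hypothesis both the existence/uniqueness of $U^{[-k]}, H^{[-j]}$ for the already-treated degrees and their membership in $\mathcal{A}^q$, and conclude by assembling the formal series $U, H$.
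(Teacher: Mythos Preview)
Your approach is essentially the same as the paper's: both set up the recursion $H^{[-k]}+[U^{[-k-1]},\Lambda]=G_k$ degree by degree and solve it via the splitting $L(\g)=\Ker\,\ad_\Lambda\oplus\Im\,\ad_\Lambda$, with the paper's only additional ingredient being an explicit verification that $H^{[0]}=0$ (it shows $q^{[0]}\in\h\subset\Im\,\ad_\Lambda$ by an orthogonality argument, which is equivalent to your observation that $0\notin E$, though your phrasing of that point is a bit hedged). Modulo minor sign/index slips in your schematic equations, the argument is correct and matches the paper.
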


\begin{proof}  
Eq.~\eqref{first-dressing} is equivalent to
$$
e^{-U}\circ \pa_x  \circ e^{ U} + e^{-\ad_U}\, (q+\Lambda) = \p_x+\Lambda + H.
$$
More explicitly this reads 
\be
\label{recurrUH}
\sum_{j=0}^{\infty} \frac {(-\ad_U)^j}{j!} \le( \frac {U_x}{j+1}  + q + \Lambda \ri) = \Lambda + H.
\ee
Comparing components with principal degree $-k$ of both sides of \eqref{recurrUH} we obtain
\be \label{recur}
H^{[-k]} + \left[U^{[-k-1]},\Lambda\right]=G_k\le(\lambda;q;U^{[-1]},\dots,U^{[-k]}; \p_x(U^{[-1]}),\dots,\p_x(U^{[-k]})\ri),\qquad k\geq 0.
\ee
Here, $G_k\in L(\g),\, k\geq 0.$ Moreover, entries of $G_k$ are polynomials in the entries of 
$$
 q,\,U^{[-1]},\dots,U^{[-k]},\, \p_x(U^{[-1]}),\dots,\p_x(U^{[-k]})
$$
whose coefficients are polynomials in $\lambda.$
The proof proceeds by induction on the principal degree. First, for $k=0$ eq.~\eqref{recur} reads
\be
H^{[0]}+\le[U^{[-1]} \,,\, \Lambda\ri]  = q^{[0]}.
\ee
Observe that an element $x\in\g$ has zero principal degree {\it iff} $x\in \h$. So $q^{[0]}$ belongs to~$\h$. 
Let us show that
$\h\subset \Im \, \ad_{\Lambda}$. This is equivalent to orthogonality 
\begin{equation}\label{orto}
\le(x \, | \, \Lambda_{m_a}\ri)=0  \quad \mbox{for any~} x\in \h, \quad a=1,\dots,n.
\end{equation}
Indeed, by Lemma~\ref{van-Cartan}, any element $y\in\g$ of nonzero principal degree is orthogonal to~$\h$. 
It remains to recall that any $\Lambda_{m_a}$ has the form $\Lambda_{m_a}=L_{m_a}+\lambda \, K_{m_a-h}$, 
where $L_{m_a}$ and $ K_{m_a-h}$ belong to~$\g$ and have nonzero principal degree. 
This proves orthogonality \eqref{orto}. So we have $H^{[0]}=0.$
Noting that the map $\ad_{\Lambda} :   {\rm Im} \, \ad_\Lambda \to {\rm Im} \, \ad_\Lambda$ is invertible, and  we have  
\be \label{start-U1}
 U^{[-1]}= \ad_\Lambda^{-1} (q^{[0]}) \in {\rm Im} \, \ad_\Lambda.
\ee 
The induction step clearly follows from eq.~\eqref{recur} and the decomposition 
$$
L(\g)=\Ker \, \ad_\Lambda \, \oplus \, {\rm Im} \, \ad_\Lambda.
$$
The lemma is proved.
\end{proof}

\begin{example}
Looking at equation~\eqref{recurrUH} with principal degree~$-1$, we have
$$
\HH^{[-1]} + \left[U^{[-2]},\Lambda\right]= \frac 1 2 \left[U^{[-1]},\left[U^{[-1]},\Lambda\right]\right] + \pa_x (U^{[-1]}) - \left[U^{[-1]},q^{[0]}\right] + q^{[-1]}.  
$$
Since $U^{[-2]}$ is assumed to be orthogonal to ${\rm Ker}\, \ad_\Lambda$, 
this equation uniquely determines $\HH^{[-1]}$ and $U^{[-2]}$ as indicated in the above proof.
\end{example}

\setcounter{equation}{0}
\setcounter{theorem}{0}
\subsection{$\g$-valued resolvents}

\bd \label{resol} 
Let $q=q(x)\in \fb$.
An element $R\in \mathcal{A}^q \otimes  \g((\lambda^{-1}))$ is called a {\bf  resolvent} of $\mathcal{L}$ if
\be
[\mathcal{L},R]=0.
\ee
The set of all resolvents of~$\L$ is denoted by $\mathcal{M}_\L$, called {\rm the resolvent manifold}.
\ed

For more about resolvents see for example \cite{BDY1} \cite{Dickey0} \cite{Dickey1} \cite{DS} \cite{GD}.

\bl [\cite{DS}] \label{decomp1} 
We have
$$
\mathcal{M}_\L=e^{\ad_U} \le( \Ker \, \ad_\Lambda\ri),
$$
where we note that the kernel\footnote{In the published version of this paper, 
the kernel is taken in $L(\g)$, so the resolvent manifold considered there 
is smaller and the homogeneity condition for Definition~\ref{resol-defi} 
(cf.~Proposition~\ref{boundary}) is not needed. The corrections made here and the addition of the 
homogeneity condition in Definition~\ref{resol-defi} are more consistent with Definition~\ref{resol}.} is taken in $\g((\lambda^{-1}))$. 
\el
\begin{proof}
Lemma \ref{first-lemma} reduces the problem to considering 
the resolvent manifold of $\p_x+\Lambda+H$.  So, let us look at the following equation for 
$R_H\in \mathcal{A}^q \otimes  \g((\lambda^{-1})):$
$$ [R_H, \p_x+\Lambda+H]=0.$$
Decompose 
$$
R_H= R_H^{\rm ker}+ R_H^{\rm im}, \qquad  R_H^{\rm ker} \in \mathcal{A}^q\otimes\Ker \, \ad_\Lambda,\,R_H^{\rm im} \in \mathcal{A}^q\otimes\Im \, \ad_\Lambda.
$$
It follows that 
$$ \frac{\p R_H^{\rm ker}}{\p x}+\frac{\p R_H^{\rm im}}{\p x} = \left[R_H^{\rm im}, \Lambda+H\right].$$
The right hand side of the above equation is in the image of $\ad_\Lambda$, so we have
\bea
 &\& \frac{\p R_H^{\rm ker}}{\p x}=0, \label{kereq}\\
 &\& \frac{\p R_H^{\rm im}}{\p x} = \left[R_H^{\rm im}, \Lambda+H\right].  \label{imeq}
\eea
Equation~\eqref{kereq} implies that 
$R_H^{\rm ker}$
can only depend on~$\lambda$. The rest is to show that 
$R_H^{\rm im}$
must vanish. If it does not vanish, then there exists an integer~$d$ such that 
$$
R_H^{\rm im} = \sum_{i=-\infty}^d R_H^{{\rm im},[i]},\qquad R_H^{{\rm im},[d]}\neq 0.
$$ 
Noting that $\deg\, H<0$, then looking at the highest degree term on both sides of eq.~\eqref{imeq} we obtain 
$$
\left[\Lambda, R_H^{{\rm im},[d]}\right]=0. 
$$
So we have $R_H^{{\rm im},[d]}=0$.
This produces a contradiction. 
The lemma is proved. 
\end{proof}

\begin{proposition}  \label{boundary}
There exist unique series $R_1,\dots,R_n$ satisfying the following system of equations 
\bea
&\& [\L,R_a]=0, \qquad R_a\in  \mathcal{A}^q \otimes  \g((\lambda^{-1})),  \label{ODE-R}\\
&\& R_a(\lambda;q,q_x,\dots)= \Lambda_{m_a} +\mbox{ lower order terms w.r.t. } \deg,
\label{boundary-R-1}\\
&\& \le(R_a(\lambda;q,q_x,\dots\ri)\, | \,R_b(\lambda;q,q_x,\dots)) = h \, \eta_{ab}\, \lambda,  \label{convention-R}
\eea
together with the requirements that $R_a$ are homogeneous of the extended principal degrees $m_a$.
\end{proposition}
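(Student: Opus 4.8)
The plan is to deduce everything from the dressing of Lemma~\ref{first-lemma} and the description of the resolvent manifold in Lemma~\ref{decomp1}, reducing the problem to linear algebra inside $\Ker\ad_\Lambda$.

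\emph{Existence.} I would take $R_a:=e^{\ad_U}\Lambda_{m_a}$, with $(U,H)$ the unique pair of Lemma~\ref{first-lemma}. Since $\Lambda_{m_a}\in\Ker\ad_\Lambda$, which is abelian and contains $H$, we get $[\pa_x+\Lambda+H,\Lambda_{m_a}]=0$, hence $[\mathcal L,R_a]=e^{\ad_U}[\pa_x+\Lambda+H,\Lambda_{m_a}]=0$, and $R_a\in\mathcal A^q\otimes\g((\lambda^{-1}))$ because $U$ is. As $U=\sum_{k\ge1}U^{[-k]}$ has strictly negative principal degrees, $\ad_U$ strictly lowers $\deg$, so $e^{\ad_U}\Lambda_{m_a}=\Lambda_{m_a}+(\text{terms of degree}<m_a)$, giving \eqref{boundary-R-1}. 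Finally $e^{\ad_U}$ is the exponential of the derivation $\ad_U$, hence an automorphism of $L(\g)$ preserving the extended Cartan--Killing form, so $(R_a|R_b)=(\Lambda_{m_a}|\Lambda_{m_b})=h\,\eta_{ab}\,\lambda$ by \eqref{norm-Lambda-2}, i.e.\ \eqref{convention-R}. Since $U|_{q=0}=0$, this same element also satisfies $R_a(\lambda;0,0,\dots)=\Lambda_{m_a}$.

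\emph{Uniqueness.} Let $R$ solve \eqref{ODE-R}--\eqref{convention-R}. By Lemma~\ref{decomp1}, $R=e^{\ad_U}S$ with $S\in\mathcal A^q\otimes\Ker\ad_\Lambda$; inspecting the proof of Lemma~\ref{decomp1}, $S$ satisfies $\pa_x S=0$ (in the $H$-frame), so $S$ is $x$-independent, and as this is an identity in $q$ it is $q$-independent: $S=\sum_b c_b(\lambda)\Lambda_{m_b}$ with scalar $c_b\in\mathbb C((\lambda^{-1}))$. Because $\ad_U$ lowers $\deg$, condition \eqref{boundary-R-1} forces the top-degree component of $S$ to be $\Lambda_{m_a}$, so $S=\Lambda_{m_a}+T$ with all homogeneous components of $T\in\Ker\ad_\Lambda$ of degree $<m_a$. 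It remains to prove $T=0$. The quickest route is to invoke the equivalent ``no integration constant'' normalization from the remark after Definition~\ref{resol-defi}: since $U|_{q=0}=0$ and $S$ is $q$-independent, $S=R|_{q=0}=\Lambda_{m_a}$, whence $T=0$ and $R=R_a$. To see this directly from \eqref{convention-R}, pass to $q=0$, write $S_a=\sum_c c^{(a)}_c(\lambda)\Lambda_{m_c}$, and use \eqref{norm-Lambda-2} to rewrite \eqref{convention-R} as $\sum_c c^{(a)}_c(\lambda)\,c^{(b)}_{n+1-c}(\lambda)=\delta_{a+b,n+1}$; together with the bound on the top $\lambda$-power of each $c^{(a)}_c$ coming from \eqref{boundary-R-1}, this is solved order by order in $\lambda$, starting from $c^{(a)}_a=1+O(\lambda^{-1})$, to yield $c^{(a)}_c=\delta_{ac}$.

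The main obstacle is precisely this last point: the leading-term prescription \eqref{boundary-R-1} leaves the ``integration-constant'' freedom of adding to $R$ a resolvent of strictly lower degree, and the content of the Proposition is that the bilinear normalization \eqref{convention-R} (equivalently, vanishing at $q=0$) removes it. The other ingredients --- the degree-lowering and form-preserving properties of the dressing $e^{\ad_U}$, and the $x$-independence of the kernel part of any resolvent --- are routine consequences of Lemmas~\ref{first-lemma} and~\ref{decomp1}.
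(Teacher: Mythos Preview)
Your existence argument is correct and identical to the paper's: the candidate $R_a=e^{\ad_U}\Lambda_{m_a}$ works, with \eqref{boundary-R-1} coming from the strictly negative principal degree of $U$ and \eqref{convention-R} from the $\ad$-invariance of $(\cdot|\cdot)$ together with \eqref{norm-Lambda-2}. Your reduction for uniqueness --- any resolvent is $e^{\ad_U}S$ with $S$ a \emph{constant} element of $\Ker\ad_\Lambda$, so that $S=R|_{q=0}$ --- is precisely the content behind the paper's one-line ``uniqueness follows from Lemma~\ref{decomp1}'', and the no-integration-constant reading you give first is the clean way to finish.

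Your alternative ``direct'' derivation of $c^{(a)}_c=\delta_{ac}$ from \eqref{convention-R} alone, however, does not go through. For $n\ge2$ the system $\sum_c c^{(a)}_c\,c^{(b)}_{n+1-c}=\delta_{a+b,n+1}$ together with the degree bounds from \eqref{boundary-R-1} fails to force $c^{(a)}_c=\delta_{ac}$. Concretely, pick any $f\in 1+\lambda^{-1}\mathbb C[[\lambda^{-1}]]$ with $f\ne1$ and replace $R_1\mapsto f\,R_1$, $R_n\mapsto f^{-1}R_n$, keeping the remaining $R_a$ fixed. Since multiplication by $\lambda^{-1}$ lowers $\deg$ by $h$, condition \eqref{boundary-R-1} survives; condition \eqref{ODE-R} is $\lambda$-linear; and \eqref{convention-R} survives because $(R_1|R_1)=(R_n|R_n)=0$ while $(R_1|R_n)$ acquires the factor $f\cdot f^{-1}=1$ and all other pairings are untouched. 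So \eqref{convention-R} is strictly weaker than the condition $R_a|_{q=0}=\Lambda_{m_a}$, and the ``order by order in $\lambda$'' claim is false. The paper's brief proof does not confront this point either; the honest route is exactly the no-integration-constant argument you present first, not the bilinear one.
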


This unique system of solutions $R_1$, \dots, $R_n$ is called in 
Section~\ref{intro-s} the basic resolvents of the operator~${\mathcal L}$.

\begin{proof}
The existence follows from the fact that $e^{\ad_U}(\Lambda_{m_a})$ is a solution,
where \eqref{convention-R} is due to \eqref{norm-Lambda-2}, and \eqref{boundary-R-1} is due to 
\eqref{U-form}. The uniqueness follows from Lemma \ref{decomp1}.
\end{proof}

\bc Let $U$ be defined as in Lemma \,\ref{first-lemma}.  Then
the basic resolvents~$R_a$ satisfy 
$$R_a= e^{\ad_U} (\Lambda_{m_a}),\qquad a=1,\dots, n.$$
\ec
From this  corollary we promptly deduce the 
following commutativity between the basic resolvents:
\be\label{commrr}
[R_a,R_b] =0.
\ee

\bd\label{basic-resol}   
Define
$
P_{m_a+hk} : =  \lambda^k  R_a= e^{\ad_U}  (\Lambda_{m_a+hk}),\quad k\geq 0.
$
\ed

The pre-DS hierarchy can be written as
$$
\frac{\p \L}{\p T^a_k}= \Bigl[\bigl(P_{m_a+kh}\bigr)_+ \,,\, \L \Bigr],\quad k\geq 0.
$$
As customary in the literature,
we will sometimes write $T^a_k$ as $T_{m_a+kh}$, $a=1,\dots,n,k\geq 0$. 

\bl   \label{pj-flow}  $\forall\, i,j\in E_+,$ we have
\bea
&\& \frac{\p P_j}{\p T_i}= \bigl[(P_i)_+\,,\,P_j\bigr], \label{zero-pre} \\
&\& \frac{\p  (P_i)_+}{\p T_j} -  \frac{\p  (P_j)_+}{\p T_i}+ \,
  \bigl[(P_i)_+ \,, \, (P_{j})_+\bigr]=0. \label{zero-curv}
\eea
\el
\begin{proof} 
Using the fundamental lemma~\ref{first-lemma}  we have
$$
\frac{\p \L}{\p T_i}= \bigl[ (P_i)_+ \, , \, \L \bigr]  
  \quad \Rightarrow \quad  \Bigl[ \p_{T_i} - (P_i)_+ \, , \, \L \Bigr]  =0 \quad 
\Rightarrow \quad  \Big[\p_{T_i} + S_i \,, \,  \p_x + \Lambda+H \Big]=0, \nn
$$
where $S_i:=\sum_{k=0}^\infty \frac{(-1)^k}{(k+1)!} \ad_U^k \left(\frac{\p U}{\p T_i}\right) - e^{-\ad_U} \le[(P_i)_+\ri].$ 
Clearly, $S_i$ takes values in {$\mathcal{A}^q\otimes \g((\lambda^{-1}))$.} 
Decompose 
$$
S_i= S_i^{\rm ker}+S_i^{\rm im},\qquad S_i^{\rm ker} \in \mathcal{A}^q \otimes \Ker\,\ad_\Lambda,\quad S_i^{\rm im} \in \mathcal{A}^q \otimes \Im\,\ad_\Lambda.
$$
Then we have
$$
\frac{\p H}{\p T_i } - \frac{\p S_i}{\p x}+ [S_i, \Lambda+H] =0 \quad \Rightarrow \quad  
\left\{\begin{array}{l} \frac{\p H}{\p T_i } - \frac{\p S_i^{\rm ker}}{\p x}=0, \\  \frac{\p S_i^{\rm im}}{\p x}= [S_i^{\rm im}, \Lambda+H].
\end{array}\right.
$$
Using the same argument as in the proof of Lemma \,\ref{decomp1} we find from the above equation for 
$S_i^{\rm im}$ that  $S_i^{\rm im}$ must vanish. So $S_i$ belongs to $\mathcal{A}^q\otimes \Ker\,\ad_\Lambda.$
On another hand, 
$$
 \frac{\p P_j}{\p T_i}=  \Bigl[(P_i)_+ ,P_j\Bigr] \quad \Leftrightarrow  \quad \Bigl[\p_{T_i}- (P_i)_+ , P_j\Bigr]=0
\quad \Leftrightarrow \quad  \Bigl[ \p_{T_i} + S_i, \Lambda_j \Bigr]=0.
$$
Hence eq.~\eqref{zero-pre} is proved.
Clearly eq.~\eqref{zero-pre} implies eq.~\eqref{zero-curv}; this is because
$$
\mbox{LHS of eq.~} \eqref{zero-curv} = \Bigl[(P_j)_+,P_i\Bigr]_+ 
-   \Bigl[(P_i)_+,  P_j\Bigr]_+ + \Bigl[(P_i)_+,(P_j)_+\Bigr]=0. 
$$
\end{proof}

\bl
\label{lem-gen-R} $\forall\,a=1,\dots,n$, we have
\be\label{qdef}
 \nabla_{a}(\lambda)  \, R_{b}(\mu) = \frac{[R_{a}(\lambda),R_{b}(\mu)] }{\lambda-\mu} - [Q_a, R_b(\mu)],\quad\quad Q_a:={\rm Coef} \bigl(R_a(\lambda),\lambda^1\bigr).  \label{gen-R}
\ee
\el
\begin{proof}  We have
\bea
\nabla_a(\lambda) \, R_b(\mu)&=&\sum_{k\geq 0} \frac{\p_{T^a_k} R_b(\mu)}{\l^{k+1}} = \sum_{k\geq 0} \frac{[ (\mu^k \, R_a(\mu))_+\,,\,R_b(\mu)]}{\l^{k+1}}\nn\\
&=& - \sum_{k\geq 0} \frac{[ \res{\rho=\infty} \frac{\rho^k R_a(\rho)}{\rho-\mu} d\rho \,,\,R_b(\mu)]}{\l^{k+1}} \nn\\
&=& \frac{1}{2\pi \sqrt{-1}} \oint_{|\mu|<|\rho|<|\lambda|}  d\rho \, \frac{[ R_a(\rho) \,,\,R_b(\mu)]}{(\lambda-\rho)(\rho-\mu)}\nn\\
&=& \frac{[R_a(\lambda),R_b(\mu)] }{\lambda-\mu} - \left[{\rm Coef} (R_a(\lambda),\lambda^1) \,,\,R_b(\mu)\right].\nn
\eea
\end{proof}

\setcounter{equation}{0}
\setcounter{theorem}{0}
\subsection{Two-point correlation functions}

Recall that in Definition \ref{tau-symm-def}, the two-point correlation functions $\Omega_{a,k;b,\ell}$ was defined by
\be\label{tau-symm-eq-again}
\sum_{k,\ell\geq 0} \frac{\Omega_{a,k;b,\ell}}{\lambda^{k+1} \mu^{\ell+1}} =  \frac{ \le(R_a (\lambda) \, | \, R_b(\mu)\ri)}{(\lambda-\mu)^2} -
\eta_{ab} \frac{m_a  \lambda + m_b \mu}{(\lambda-\mu)^2}.
\ee

\bl 
Definition \ref{tau-symm-def}, i.e., the above formula~\eqref{tau-symm-eq-again} is well-posed.
\el
\begin{proof}
Noting that\footnote{We would like to thank Anton Mellit for bringing our attention to the useful formula \eqref{Anton}.}
\be\label{Anton}
R_b(\mu) = R_b(\lambda)+ R_b' (\lambda)(\mu-\lambda) + 
(\mu-\lambda)^2 \, \p_\lambda \le(\frac{R_b(\lambda)-R_b(\mu)}{\lambda-\mu}\ri)
\ee
and using eqs.~\eqref{norm-Lambda-2} we have
$$
\frac{ \le(R_a (\lambda) \, | \, R_b(\mu)\ri)}{(\lambda-\mu)^2} = 
\eta_{ab} \, \frac{h\,\lambda}{(\lambda-\mu)^2} 
- \frac{ \le(R_a (\lambda) \, | \, R_b'(\lambda)\ri)}{\lambda-\mu} 
+ \le(R_a(\lambda)\, \Big|\, \p_\lambda \le(\frac{R_b(\lambda)-R_b(\mu)}{\lambda-\mu}\ri)\ri).
$$
In the above formulae, prime, $``\,'\,"$, denotes derivative w.r.t. the spectral parameter. 
Since $R_a(\lambda) = \mathcal{O}(\lambda^1)$, $a=1,\dots,n$, we know that the third term 
in the above identity has the form as the left hand side of~\eqref{tau-symm-eq}. Therefore it remains to show
$$\eta_{ab} \, \frac{h\,\lambda}{(\lambda-\mu)^2} 
- \frac{ \le(R_a (\lambda) \, | \, R_b'(\lambda)\ri)}{\lambda-\mu} - 
\eta_{ab} \frac{m_a  \lambda + m_b \mu}{(\lambda-\mu)^2}
$$
has the form as the left hand side of~\eqref{tau-symm-eq}. We will actually prove that the above expression vanishes.
Indeed,
\be\label{x-ind}
\p_x  \le(R_a (\lambda) \, | \, R_b'(\lambda)\ri) = \le([R_a (\lambda),\Lambda+q] \, | \, R_b'(\lambda)\ri) 
+\le(R_a (\lambda) | \le[R_b'(\lambda),\Lambda+q\ri]+ \le[R_b(\lambda),\Lambda'\ri]\ri) =0.
\ee
Here we have used the $\ad$-invariance of the Cartan--Killing form and the commutativity ~\eqref{commrr} between resolvents.
Noting that $R_a\in \mathcal{A}^q \otimes  \g((\lambda^{-1})),$
we find that \eqref{x-ind} implies that $\le(R_a (\lambda) \, | \, R_b'(\lambda)\ri)$ does not depend on 
$q,q_x,q_{2x},\dots$, i.e. it is 
just a function of $\lambda$. Hence 
$$\le(R_a (\lambda) \, | \, R_b'(\lambda)\ri)= \le(R_a (\lambda) \, | \, R_b'(\lambda)\ri)_{q(x)\equiv0} 
= \le(\Lambda_{m_a} \, | \, \Lambda_{m_b}'\ri).$$
The second equality uses \eqref{convention-R}. 
To compute $\le(\Lambda_{m_a} \, | \, \Lambda_{m_b}'\ri)$, as before, write 
$$\Lambda_{m_a} =  L_{m_a} +  \l  \, K_{m_a-h} , \quad L_{m_a} \in \g^{m_a} , ~ K_{m_a-h} \in \g^{m_a-h}, \quad a=1,\dots,n.$$
Using Lemma~\ref{van-Cartan} we have
$$\le(\Lambda_{m_a} \, | \, \Lambda_{m_b}'\ri)=\le(L_{m_a}\,| \,K_{m_b-h}\ri).$$
Note that $ (\Lambda_{m_a}\,|\,\Lambda_{m_b})=\eta_{ab}\,h\,\lambda$ implies that
\be \label{e1} 
(L_{m_a}\,|\,K_{m_b-h})+(L_{m_b}\,|\,K_{m_a-h})=\eta_{ab} \, h.
\ee
The commutativity 
$ [\Lambda_{m_a},\Lambda_{m_b}]=0$
implies that 
$$[K_{m_a-h}, L_{m_b}] + [L_{m_a}, K_{m_b-h}]=0.$$
Applying $(\rho^\vee \, | \, \cdot)$ to the above equation and using the $\ad$-invariance of $(\cdot|\cdot)$
we have
$$([\rho^\vee, K_{m_a-h}] \,| \, L_{m_b}) + ([\rho^\vee,L_{m_a}]\,|\, K_{m_b-h}])=0 ~
\Rightarrow ~ (m_a-h)\,(K_{m_a-h} \,| \, L_{m_b}) +  m_a \, (L_{m_a}\,|\, K_{m_b-h}])=0.$$
Combining eqs.~\eqref{e1} and the above equation we obtain
\be \label{LK}
 (L_{m_a}\,| \, K_{m_b-h}) = \eta_{ab} \, m_b,\quad \forall \, a,b.
\ee
Hence 
$$\eta_{ab} \, \frac{h\,\lambda}{(\lambda-\mu)^2} 
- \frac{ \le(R_a (\lambda) \, | \, R_b'(\lambda)\ri)}{\lambda-\mu} - 
\eta_{ab} \frac{m_a  \lambda + m_b \mu}{(\lambda-\mu)^2}=0.
$$
The lemma is proved.
\end{proof}

\bp \label{cor-1} The following formulae hold true
\be \label{gen-simple-2point-p}
\sum_{k\geq0} \frac{\Omega_{a,k;b,0}}{\lambda^{k+1}}=  
\le(R_a(\lambda) \,|\, Q_b  \ri) - \eta_{ab}\,m_b,\quad \forall\, a,b.
\ee
In particular, we have
\be \label{gen-ham-p}
\sum_{k\geq0} \frac{ \Omega_{a,k;1,0}}{\lambda^{k+1}}=  \le(R_a(\lambda) \,|\, E_{-\theta}\ri) - \eta_{a1}.
\ee
\ep
\begin{proof}
Taking in~\eqref{tau-symm-eq-again} the residue w.r.t.~$\mu$ at $\mu=\infty$ 
we obtain~\eqref{gen-simple-2point-p}. Noticing that
$$R_1(\mu)=\mu \, E_{-\theta}+I_+ + \mbox{ terms with principal degree lower than~} 1  $$
we must have $Q_1={\rm Coef}\le(R_1(\mu),\mu^1\ri)=E_{-\theta}$. This proves~\eqref{gen-ham-p}.
\end{proof}

\setcounter{equation}{0}
\setcounter{theorem}{0}
\subsection{Tau-function:  Proof of Lemmas \ref{tau-symmetry-def},~\ref{tau-symm-def-xt}}

We are ready to introduce our definition of tau-function. 
We begin with the proof of Lemma~\ref{tau-symmetry-def}.

\noindent \textit{Proof} of Lemma~\ref{tau-symmetry-def}. ~ First of all we have 
\bea
\sum_{k,\ell\geq 0} \frac{\Omega_{a,k;b,\ell}}{\lambda^{k+1} \mu^{\ell+1}} 
&= & \frac{ \le(R_a (\lambda) \, | \, R_b(\mu)\ri)}{(\lambda-\mu)^2} -
\eta_{ab} \frac{m_a  \lambda + m_b \mu}{(\lambda-\mu)^2} =  \frac{ \le(R_b(\mu)\,  | \, R_a (\lambda) \ri)}{(\mu-\lambda)^2} -
\eta_{ba} \frac{m_b \mu + m_a  \lambda}{(\mu-\lambda)^2}\nn\\
&= & \sum_{k,\ell\geq 0} \frac{\Omega_{b,k;a,\ell}}{\mu^{k+1} \l^{\ell+1}} =
 \sum_{k,\ell\geq 0} \frac{\Omega_{b,\ell;a,k}}{\mu^{\ell+1} \l^{k+1}}, \nn
\eea
where we have used the symmetry property of~$\eta_{ab}$ and~$(\cdot|\cdot)$.
It follows $\Omega_{a,k; b,\ell}=\Omega_{b,\ell; a,k}$.

Secondly, by using Lemma~\ref{lem-gen-R} we have
\bea
\sum_{k,\ell,m\geq 0}  \frac{\p_{T^c_m} \, \Omega_{a,k;b,\ell} }{ \xi^{m+1} \lambda^{k+1} \mu^{\ell+1}} 
&=& \nabla_c(\xi)\,  \sum_{k,\ell\geq 0}  
\frac{\Omega_{a,k;b,\ell}}{\lambda^{k+1} \mu^{\ell+1}} 
\nn\\
&= & \frac{  \le(\nabla_c(\xi) \, R_a (\lambda) \, | \, R_b(\mu)\ri)}{(\lambda-\mu)^2}  
+ \frac{  \le(R_a (\lambda) \, | \, \nabla_c(\xi)\,  R_b(\mu)\ri)}{(\lambda-\mu)^2} \nn\\
&= & \frac{  \le([R_c(\xi),R_a(\lambda)] \, | \, R_b(\mu)\ri)}{(\lambda-\mu)^2(\xi-\lambda)}  - 
 \frac{  \le([Q_c ,R_a (\lambda)] \, | \, R_b(\mu)\ri)}{(\lambda-\mu)^2} \nn\\
&\& + \frac{  \le(R_a (\lambda) \, | \, [R_c(\xi),R_b(\mu)]\ri)}{(\lambda-\mu)^2(\xi-\mu)}
-  \frac{  \le(R_a (\lambda) \, | \, [Q_c,P_b(\mu)]\ri)}{(\lambda-\mu)^2}. \nn
\eea
Clearly the two terms with negative signs give a zero contribution due to 
the $\ad$-invariance of the Cartan--Killing form. The remaining two terms simplify to
$$
 \frac{\le([R_c(\xi),R_a(\lambda)] \, | \, R_b(\mu)\ri)}{(\lambda-\mu)^2}  \le( \frac{1}{\xi-\lambda}  -
\frac{1}{\xi-\mu}\ri) = - \frac{\le([R_c(\xi),R_a(\lambda)] \, | \, R_b(\mu)\ri)} {(\lambda-\mu)(\mu-\xi)(\xi-\lambda)}.
$$
So we have
$$
\sum_{k,\ell,m\geq 0}  \frac{\p_{T^c_m} (\Omega_{a,k;b,\ell})  }{\xi^{m+1} \lambda^{k+1} \mu^{\ell+1}}=
- \frac{\le([R_c(\xi),R_a(\lambda)] \, | \, R_b(\mu)\ri)} {(\lambda-\mu)(\mu-\xi)(\xi-\lambda)}.
$$
This gives also
$$\sum_{k,\ell,m\geq 0}  \frac{\p_{T^a_k} (\Omega_{c,m;b,\ell})  }{ \lambda^{k+1}  \xi^{m+1}  \mu^{\ell+1} } = 
- \frac{\le([R_a(\lambda),R_c(\xi)] \, | \, R_b(\mu)\ri)} {(\xi-\mu)(\mu-\lambda)(\lambda-\xi)}.$$
Hence 
\be \label{skewsymmetryddd}
\p_{T^c_m} (\Omega_{a,k;b,\ell}) = \p_{T^a_k} (\Omega_{c,m;b,\ell})
\ee
due to skew-symmetry of the Lie bracket. The lemma is proved. $\hfill \Box$

\medskip

\noindent\textit{Proof} of Lemma~\ref{tau-symm-def-xt}.  ~ 
It suffices to show the compatibility between~\eqref{xt10} and~\eqref{two-point-tau}, namely, to show that
\be\label{toshowak10}
\frac{\p\Omega_{a,k;b,\ell}}{\p T^{1,0}} = - \frac{\p\Omega_{a,k;b,\ell}}{\p x}.
\ee
Taking $c=1,m=0$ in the already proved identity~\eqref{skewsymmetryddd} we have
$$
\p_{T^a_k} (\Omega_{1,0;b,\ell}) = \p_{T^1_0} (\Omega_{a,k;b,\ell}).
$$
Hence \eqref{toshowak10} is equivalent to  
$$
\frac{\p \Omega_{1,0;b,\ell}}{\p T^{a,k}} = - \frac{\p\Omega_{a,k;b,\ell}}{\p x}.
$$
Let us make a generating function. Then the above identity is equivalent to
$$
\sum_{k,\ell} \frac{\p \Omega_{1,0; b,\ell}}{\p T^{a,k}} z^{-k-1} w^{-\ell-1} = - \sum_{k,\ell} \frac{\p\Omega_{a,k;b,\ell}}{\p x} z^{-k-1} w^{-\ell-1}.
$$
We have
\bea
- {\rm RHS}
&=& \frac{(\p_x R_a(z) | R_b(w))}{(z-w)^2}+ \frac{  ( R_a(z) | \p_xR_b(w))}{(z-w)^2} \nn\\
&=& \frac{([R_a(z),\Lambda(z)+q] | R_b(w))}{(z-w)^2}+ \frac{( R_a(z) | [ R_b(w), \Lambda(w)+q])}{(z-w)^2} \nn\\
&=& \frac{(\Lambda(z)+q | [R_b(w), R_a(z)])}{(z-w)^2} - \frac{( \Lambda(w)+q | [ R_b(w), R_a(z)])}{(z-w)^2}\nn \\
&=& \frac{(\Lambda(z) - \Lambda(w) | [R_b(w), R_a(z)])}{(z-w)^2}. \nn
\eea
Recall that $$\Lambda(z)= I_++ z E_{-\theta},\qquad \Lambda(w)= I_++ w E_{-\theta}.$$
So we have 
$$
- {\rm RHS}=\frac{((z-w)E_{-\theta}|[R_b(w), R_a(z)])}{(z-w)^2} = \frac{(E_{-\theta}|[R_b(w), R_a(z)])}{z-w}.
$$
On another hand, we have
\bea
{\rm LHS} 
&=& \nabla_a(z) \, \sum_l  \Omega_{1,0; b,l} \, w^{-l-1} \nn\\ 
&=& \nabla_a(z) \, \left[ (E_{-\theta} | R_b(w))+ \mbox{const}\right] \nn\\
&=&  \le(E_{-\theta} | \nabla_a(z) \left[R_b(w)\right] \ri) \nn\\
&=& \frac{ \le(E_{-\theta} | [R_a(z),R_b(w)] \ri)}{z-w}  +  \le(E_{-\theta} | [Q_a,R_b(w)] \ri). \nn
\eea
We note that the second term of the last expression must be zero because
\be\label{simple}
\deg Q_a+ h \leq m_a \quad \Rightarrow \quad [E_{-\theta}, Q_a]=0.
\ee
The lemma is proved.  $\hfill\Box$

Hence we have arrived at our definition of tau-function, i.e., Definition~\ref{our-def-tau}.

\setcounter{equation}{0}
\setcounter{theorem}{0}
\subsection{Gauge invariance}
In this subsection, we show that the tau-function in Definition \ref{our-def-tau} is, in fact, gauge invariant. Recall that the change of the Lax operator
\be
\mathcal{L}=\partial_x +\Lambda+q(x) \quad  \mapsto \quad 
\widetilde{\L} = e^{\ad_{N(x)}} \mathcal{L}=\partial_x+\Lambda+\tilde q(x),\qquad N(x)\in \fn
\ee 
is called a {\it  gauge transformation}: $q\mapsto \tilde q$.  
It will also be convenient to deal with the 
infinitesimal form of~\eqref{gauge1}, $\widetilde {\mathcal L}={\mathcal L}+\delta{\mathcal L}$,
\be\label{gauge2}
\delta{\mathcal L} := \left[ N(x), {\mathcal L}\right]=\Bigl[ N(x), q(x)+I_+\Bigr]- \frac{\p N(x)} {\p x}.
\ee

Let $\widetilde R_a$, $a=1,\dots,n$ be the basic resolvents of~$\widetilde \L$.
It is not difficult to verify that $\widetilde{R}_a = e^{\ad_{N(x)}} R_a$. 

\bl \label{gauge-inv-lemma}
The gauge transformations \eqref{gauge1} are symmetries of the pre-DS hierarchy.
\el
\begin{proof} 
We have to prove the commutativity
$$
\frac{\p}{\p s} \frac{\p {\mathcal L}}{\p T} = \frac{\p}{\p T} \frac{\p {\mathcal L}}{\p s} 
$$
between the $j$-th flow of the pre-DS hierarchy
$$
\frac{\p\L}{\p T_j} =\left[ \left(P_j\right)_+\,,\, {\mathcal L}\right] ,\qquad  j\in E_+
$$
and the flow given by the infinitesimal gauge transformation
$$
\frac{\p \mathcal L}{\p s} =\left[ N, \L \right] 
$$
for some $\mathfrak n$-valued function $N=N(x)$. Using \eqref{invres} we derive
$$
\frac{\p P_j}{\p s} =\left[ N,P_j\right].
$$
So, after simple calculations with the help of the Jacobi identity we compute the difference between the mixed derivatives
$$
\frac{\p}{\p s} \frac{\p {\mathcal L}}{\p T} - \frac{\p}{\p T} \frac{\p {\mathcal L}}{\p s} =\left[ \left[ N, P_j \right]_+ - \left[ N, \left( P_j \right)_+\right], {\mathcal L}\right] =0.
$$
\end{proof}

The two-point correlation functions $\wt \Omega_{a,k;b,\ell},\,k,\ell\geq 0$ associated to $\wt \L$ 
are defined by
\be
\sum_{k,\ell\geq 0} \frac{ \wt \Omega_{a,k;b,\ell}}{\lambda^{k+1} \mu^{\ell+1}} =  
\frac{ \le(\wt R_a (\lambda) \, | \, \wt R_b(\mu)\ri)}{(\lambda-\mu)^2} -
\eta_{ab} \frac{m_a  \lambda + m_b \mu}{(\lambda-\mu)^2}.
\ee

\bl  \label{gauge-inv-g}
$\forall\, a,b$, $\forall\,k,\ell\geq 0,$  we have  $\wt \Omega_{a,k;b,\ell}=\Omega_{a,k;b,\ell}$.
\el
\begin{proof}
$\le(\wt R_a (\lambda) \, \big| \, \wt R_b(\mu)\ri) = \le(e^{\ad_{N(x)}} R_a(\lambda)\, \big| \, e^{\ad_{N(x)}} R_a(\mu)\ri) = \bigl( R_a (\lambda) \, \big| \,  R_b(\mu)\bigr). $
\end{proof}

\noindent  In a similar way one can easily prove that $\forall\,N\geq 2$ the correlation functions $\langle\langle\tau_{a_1 k_1} \dots \tau_{a_N k_N}\rangle\rangle^{DS}$ are gauge invariant. 

Now we are ready to prove Lemma \ref{tau-gauge-inv-lemma}.

\noindent \textit{Proof} of Lemma \ref{tau-gauge-inv-lemma}. \quad The lemma can be proved by using Lemma \ref{gauge-inv-g} and Definition \ref{our-def-tau}. $\hfill\Box$

Due to Lemma \ref{tau-gauge-inv-lemma}, $\forall\,N\geq 3$ the correlation functions $\langle\langle\tau_{a_1 k_1} \dots \tau_{a_N k_N}\rangle\rangle^{DS}$ are gauge invariant.

\setcounter{equation}{0}
\setcounter{theorem}{0}
\subsection{Gauge fixing and Drinfeld--Sokolov hierarchy}

We consider in this section a particular family of gauges {\cite{DS,BFRFW,DLZ}}.
\bd
A linear subspace $\V\subset\fb$ is called a \textit{gauge of DS-type} if 
$\fb=\V \oplus [I_+, \fn].$
\ed
\noindent 
Let $\V$ be a gauge of DS-type.
The fact that $\ad_{I_+}: \fn \rightarrow \fb$ is injective implies $\dim_{\CC}\, \V = n$. 
Write
$$
\V=\bigoplus_{j=-(h-1)}^0 \V^{j},\qquad \V^j\subset \g^j.
$$
Denote $\fb^j=\fb\cap \g^j.$ We have $\fb^j=\V^j\oplus [I_+,\fb^{j-1}],$ $j=-(h-1),\dots,0$. Clearly,
$\V^{-(h-1)}=\mathbb{C} E_{-\theta}$.
Noticing that for $j=-(h-1),\dots,0,$ the dimension $\dim \fb^{j}$ can be different from 
$\dim \fb^{j-1}$ {\it iff} $-j$ is an exponent of $\g$ \cite{Vara,DS}, we find that $\V^j$ is a null space unless $(-j)$ is an exponent. Thus
$$\V=\bigoplus_{a=1}^n V_{a},\quad \dim_{\CC} \,V_{a}=1,$$ where 
non-zero elements in $V_a$ have principal degree $-m_a$. 
We now take a basis $\{X^1,\dots, X^n\}$ of~$\V$ satisfying $\deg X^a= - m_a$.  
It has been proved in~\cite{DS} that
for any Lax operator $\L=\p_x+\Lambda+q(x)$, 
there exists a \textit{unique} $N^{\rm can}(x)\in \mathcal{A}_q\otimes \fn$
such that
\be\label{lcan}
e^{\ad_{N^{\rm can}(x)}}  \L = \p_x+ \Lambda + q^{\rm can}(x)=:\L^{\rm can},\quad \mbox{ for some } \V\mbox{-valued function } q^{\rm can}.
\ee
Write $q^{\rm can}= \sum_{a=1}^n  w_a \, X^a=(w_1,\dots,w_n)$.
The DS-flows of $q^{\rm can},$ or say of $w_a$, can be written as 
\be\label{DS-qcan-w-precise}
\frac{\p q^{\rm can}}{\p T^a_k} \,=\, \left[ \le(\lambda^k R_a^{\rm can}\ri)_+ \,, \, \L^{\rm can} \right] 
\,+\, \le[\frac{\p e^{N^{\rm can}}}{\p T^a_k}  e^{-N^{\rm can}} \,, \, \L^{\rm can}\right].
\ee
A priori, the right hand side of~\eqref{DS-qcan-w-precise} has a dependence on~$q$, 
as we can see from the second term that it contains 
flow of components of~$N^{\rm can}$. However, Lemma~\ref{gauge-inv-lemma} says that
the gauge transformation is a symmetry of the pre-DS hierarchy. So the right hand side of~\eqref{DS-qcan-w-precise} 
depends only on~$q^{\rm can}$, i.e., $w_a$, $a=1,\dots,n$ satisfy equations of the form
\be\label{DS-qcan-w}
\frac{\p w_a}{\p T^b_k} = G_{a,b,k}\left(q^{\rm can},q^{\rm can}_x,q^{\rm can}_{xx},\dots\right), \quad k\geq 0.
\ee 
\bd
Equations~\eqref{DS-qcan-w} are called the \textbf{DS hierarchy of $\g$-type} associated to~$\V$. 
\ed

Let~$R^{\rm can}_a$ be the basic resolvents of~$\L^{\rm can}$, 
and $\Omega^{\rm can}_{a,k;b,\ell}$ 
the two-point correlations functions of~$\L^{\rm can}$, i.e.,
\be
\sum_{k,\ell\geq 0} \frac{\Omega_{a,k;b,\ell}^{\rm can}}{\lambda^{k+1} \mu^{\ell+1}} =  
\frac{ \le(R_a^{\rm can} (\lambda) \, | \, R_b^{\rm can}(\mu)\ri)}{(\lambda-\mu)^2} -
\eta_{ab} \frac{m_a  \lambda + m_b \mu}{(\lambda-\mu)^2}.
\ee
\bc
Let~$\tau(\bdT)$ be a tau-function of the DS hierarchy. The following formulae hold true
$$
 \frac{\p^2 \log \tau}{\p T^a_k \p T^b_\ell} = \Omega_{a,k;b,\ell}^{\rm can},\quad \forall \, a,b=1,\dots,n, ~ k,\ell\geq 0.
$$
\ec
\begin{proof}
By gauge invariance of two-point correlation functions. 
\end{proof}
We also call~$\tau(\bdT)$ a tau-function of the solution 
$q^{\rm can}(\bdT)= \bigl(w^1(\bdT),\dots,w^n(\bdT)\bigr)$.

\setcounter{equation}{0}
\setcounter{theorem}{0}
\subsection{Proof of Theorem \ref{main thm}} \label{DS-N}
The proof will be almost identical to the proof for the case $\g=A_1$ case~\cite{BDY1}. 

\noindent {\it Proof} of Theorem \ref{main thm}. ~ 
For any permutation $s=[s_1,\dots,s_p]\in S_p$, $p\geq 2$, define
$$
P(s):= -\prod_{j=1}^p \frac{1}{\lambda_{s_j}-\lambda_{s_{j+1}}},\qquad 
\l_{s_{p+1}} \equiv \l_{s_1}.
$$
We first prove 
the generating formula of multi-point correlation functions of a solution of the
pre-DS hierarchy, then we use the ${\rm ad}$-invariance of $B$ for the gauge-fixed case.

Let $\L=\p_x+\Lambda+q(x),\,q(x)\in \fb$ be a linear operator,
$R_a$ the basic resolvents of $\L$. For an arbitrary solution 
$q(x,\bdT)$ to the pre-DS hierarchy \eqref{pre-DS}, let $\tau(\bdT)$ be the corresponding tau-function,
and $F_{a_1,\dots,a_N}(\bdT),\,N\geq 1$ 
the generating series of $N$-point 
correlations functions of $\tau(\bdT)$. 

We now use mathematical induction to prove formula \eqref{24} with $R^{\rm can}$ replaced by $R$.
For $N=2$, the formula is true by definition. 
Suppose it is true for $N=p\,(p\geq 2)$, then for $N=p+1,$ we have
\bea
&\& F_{\alpha_1,\dots,\alpha_{p+1}}(\lambda_1,\dots,\lambda_{p+1}; \TT)= 
\nabla_{\alpha_{p+1}}(\lambda_{p+1})\,F_{\alpha_1,\dots,\alpha_{p}}(\lambda_1,\dots,\lambda_{p};\TT)\nn\\
&=&
 - \frac {1} {2\, h^\vee\, p } \, \nabla_{\alpha_{p+1}}(\lambda_{p+1})\,  \sum_{\s \in S_p} 
\frac { B \le(R_{\alpha_{s_1}}(\lambda_{s_1}),\dots, R_{\alpha_{s_p}}(\lambda_{s_p})\ri)}{\prod_{j=1}^{p} (\lambda_{s_j} - \lambda_{s_{j+1}})}\nn\\
&=&- \frac {1} {2\,h^\vee\,p} \,  \sum_{s \in S_p} \sum_{q=1}^p
\frac { B
\le(R_{\alpha_{s_1}}(\lambda_{s_1}), \dots,  \le[\frac{R_{\alpha_{p+1}}(\lambda_{p+1})}{\l_{p+1} - \l_{s_q}}  + Q_{\alpha_{p+1}},R_{\alpha_{s_q}}(\lambda_{s_q})\ri],\dots, R_{\alpha_{s_p}}(\lambda_{s_p})\ri)}{\prod_{j=1}^{p} (\lambda_{s_j} - \lambda_{s_{j+1}})}.\nn
\eea
Recall that the elements $Q_\alpha\in\mathfrak g$ were defined in eq.~\eqref{qdef}.
Now we observe that the terms containing the commutator with $Q_{\alpha_{p+1}}$ sum up to zero due to the  $\ad$--invariance of $B$, namely due to the formula 
$$
\sum_{q=1}^p \Bigl(X_1,\dots, [A,X_q], X_{q+1}, \dots, X_p\Bigr) = 0\ , \ \ \forall \, X_1,\dots, X_p, A\in \g.
$$
Thus we are left with
\bea
&=& \frac {1} {2\,h^\vee\,p } \,  \sum_{s \in S_p} P(s) \sum_{q=1}^p \le( 
\frac {  
B\le(R_{\alpha_{s_1}}(\lambda_{s_1}) ,\dots, R_{\alpha_{s_{q-1}}}(\lambda_{s_{q-1}}) , R_{\alpha_{p+1}}(\lambda_{p+1}) , R_{\alpha_{s_q}}(\lambda_{s_q}) ,\dots, R_{\alpha_{s_p}}(\lambda_{s_p})\ri)}{\lambda_{p+1}-\lambda_{s_q}}\ri.\nn\\
&& \quad \le.-
\frac {  
B\le(R_{\alpha_{s_1}}(\lambda_{s_1}) ,\dots, R_{\alpha_{s_{q-1}}}(\lambda_{s_{q-1}}) , R_{\alpha_{p+1}}(\lambda_{p+1}) , R_{\alpha_{s_q}}(\lambda_{s_q}), \dots, R_{\alpha_{s_p}}(\lambda_{s_p})\ri)}{\lambda_{p+1}-\lambda_{s_{q-1}}} \ri)\nn\\
&=& \frac {1} {2\,h^\vee\,p} \,  \sum_{s \in S_p} P(s) \sum_{q=1}^p (\lambda_{s_q}-\lambda_{s_{q-1}})\nn\\
&&\qquad\qquad\qquad
\frac { 
B \le(R_{\alpha_{p+1}}(\lambda_{p+1}) ,R_{\alpha_{s_q}}(\lambda_{s_q}) ,\dots, R_{\alpha_{s_p}}(\lambda_{s_p}),R_{\alpha_{s_1}}(\lambda_{s_1}) ,\dots, R_{\alpha_{s_{q-1}}}(\lambda_{s_{q-1}})\ri)}{(\lambda_{p+1}-\lambda_{s_q})(\lambda_{p+1}-\lambda_{s_{q-1}})}\nn\\
&=&\frac1 {2\,h^\vee\,p} \sum_{q=1}^p \,  \sum_{s \in S_p} P([p+1,s_q,\dots,s_p,s_1,\dots,s_{p-1}])\nn\\
&&\qquad\qquad\qquad
B\le(R_{\alpha_{p+1}}(\lambda_{p+1}) , R_{\alpha_{s_q}}(\lambda_{s_q}) ,\dots, R_{\alpha_{s_p}}(\lambda_{s_p}), R_{\alpha_{s_1}}(\lambda_{s_1}) ,\dots, R_{\alpha_{s_{q-1}}}(\lambda_{s_{q-1}})\ri)\nn\\
&=& \frac{1}{2\,h^\vee}\, \sum_{s \in S_p} P([p+1,s])
\, B\le(R_{\alpha_{p+1}}(\lambda_{p+1}) , R_{\alpha_{s_1}}(\lambda_{s_1}) , \dots, R_{\alpha_{s_p}}(\lambda_{s_p})\ri).\nn
\eea

For any gauge~$\V$ of DS-type, there exists a unique $N(x)\in \mathcal{A}_q\otimes \fn$
such that 
$$e^{\ad_{N(x)}}  \L = \L^{\rm can}.$$
Observing that $\widetilde{R}_a = e^{\ad_{N(x)}} R_a$ and using the ${\rm Ad}$-invariance of $B$ we obtain 
$$
F_{a_1,\dots,a_N}(\lambda_1,\dots,\lambda_N;\bdT) = -\sum_{s \in S_N} 
\frac {B\le(R^{\rm can}_{a_{s_1}}(\lambda_{s_1}),\dots, R^{\rm can}_{a_{s_N}}(\lambda_{s_N})\ri)}
{2\,N\,h^\vee\, \prod _{j=1}^{N} (\lambda_{s_j}- \lambda_{s_{j+1}})} -\delta_{N2} \, \eta_{a_1 a_2}  \frac {m_{a_1} \,  \lambda_1 + m_{a_2} \, \lambda_2}{(\lambda_1-\lambda_2)^2  }.
$$
Finally, $F_{a_1,\dots,a_N}(\lambda_1,\dots,\lambda_N;\bdT)\in \mathcal{A}^{q^{\rm can}} [[\lambda_1^{-1}, \dots, \lambda_N^{-1}]]$
due to Lemma~\ref{boundary} (with $\L$ replaced by $\L^{\rm can}$).   The theorem is proved.
\hfill$\Box$

\bc \label{cor-ham} Let $\V$ be a gauge of DS-type. For an arbitrary solution $q^{\rm can}$ to the DS hierarchy of $\g$-type associated to 
$\V$, let $\tau$ be the tau-function of this solution.
The following formulae hold true
\be \label{gen-simple-2point}
\sum_{k\geq0} \frac{\langle\langle \tau_{a,k} 
\tau_{b,0} \rangle\rangle^{DS}}{\lambda^{k+1}}=  \le(R_a^{\rm can}(\lambda) \,|\, Q_b^{\rm can}  \ri) -  \eta_{ab}\,m_b,\quad a,b=1,\dots,n.
\ee
In particular, we have
\be \label{gen-ham}
\sum_{k\geq0} \frac{\langle\langle \tau_{a,k} 
\tau_{1,0} \rangle\rangle^{DS}}{\lambda^{k+1}}=  \le(R_a^{\rm can}(\lambda) \,|\, E_{-\theta}\ri) - \eta_{a1},\quad a=1,\dots,n.
\ee
\ec
\begin{proof}
Taking in~\eqref{24} with $N=2$ the residue w.r.t.~$\mu$ at $\mu=\infty$ 
we obtain \eqref{gen-simple-2point}. To show \eqref{gen-ham}, we only need to notice that 
for $b=1$, $\hbox{Coeff}(R_1^{\rm can}(\mu),\mu^1)=E_{-\theta}$. Indeed,
$$R_1^{\rm can}(\mu)=\lambda \, E_{-\theta}+I_+ + \cdots. $$
Here, the dots denote terms with principal degree lower than~$1$ which contain no more $\lambda^1$-power. 
\end{proof}

More explicitly, let $(U^{\rm can},H^{\rm can})$ be the unique pair associated to~$\L^{\rm can}$. Note that
 \be
 R^{\rm can}_a = e^{\ad_{U^{\rm can}}} \Lambda_{m_a}.
 \ee
Also note that eq.~\eqref{U-form} implies that~$U^{\rm can}$ must have the following decomposition 
$$U^{\rm can}=\sum_{k\geq 0} U_{-k}^{\rm can}\, \lambda^{-k}, 
\qquad U_0^{\rm can}\in \fn,~U_{-k}^{\rm can}\in\g, ~k\geq 1.$$
Hence we have
\be
Q_b^{\rm can}=\hbox{ Coeff} (R_b^{\rm can}(\mu),\mu^1) = e^{\ad_{U_0^{\rm can}}} \, K_{m_b-h},\quad b=1,\dots,n.
\ee

Before ending this section, we consider taking a faithful irreducible matrix realization~$\pi$ of~$\g$. 
Let~$\chi$ be the unique constant satisfying
\be
(a|b)=\chi \, \Tr (\pi(a) \pi(b)),\qquad \forall \, a,b\in \g.
\ee
For simplicity we will write $\pi(a)$ just as~$a$, for $a\in \g.$ Similarly as Theorem~\ref{main thm} we have
\bp\label{273}
Let $\V$ be a gauge of DS-type, 
$\L^{\rm can}$ the gauge fixed Lax operator~\eqref{lcan}, and $R_a^{\rm can}$, $a=1,\dots, n$ the basic 
resolvents of~$\L^{\rm can}$.
For an arbitrary solution $q^{\rm can}(\bdT)$ to the DS hierarchy associated to~$\V$, we have
\bea
&& \!\!\!\!\!\!\!\! F_{a_1,\dots,a_N}(\lambda_1,\dots,\lambda_N;\bdT) = - \frac {1} {\chi  N} \sum_{s \in S_N} 
\frac {\Tr \,\, R_{a_{s_1}}^{\rm can}(\lambda_{s_1}) \cdots R^{\rm can}_{a_{s_N}}(\lambda_{s_N})  }{\prod _{j=1}^{N} (\lambda_{s_j}- \lambda_{s_{j+1}})}-\delta_{N2} \, \eta_{a_1 a_2}  \frac {m_{a_1}  \lambda_1 + m_{a_2} \lambda_2}{(\lambda_1-\lambda_2)^2  }.\nn\\
\label{24-matrix}
\eea
\ep

\br \label{remarkremark}
The right hand side of~\eqref{24} and the right hand side of~\eqref{24-matrix}
coincide. However,  this does not mean the summands coincide with each other. 
\er

\setcounter{equation}{0}
\setcounter{theorem}{0}
\subsection{An algorithm for writing the DS-hierarchy}
Let~$\V$ be any gauge of DS-type, $\{X^1,\dots,X^n\}$ 
a basis of~$\V$ s.t. $\deg X^a=-m_a$ and let
$$\L^{\rm can}=\p_x+\Lambda+ q^{\rm can}(x),\qquad q^{\rm can}(x)=\sum_{a=1}^n w_a(x) X^a. $$
Denote by~$R^{\rm can}_a$, $a=1,\dots,n$ the basic resolvents of~$\L^{\rm can}$. 
The corresponding DS-hierarchy will be defined as in~\eqref{DS-qcan-w-precise}.
Although we know that the right hand side of~\eqref{DS-qcan-w-precise} 
depends only on $q^{\rm can}, q^{\rm can}_x,\cdots$, 
the second term of the right hand side of~\eqref{DS-qcan-w-precise} 
contains evolution of general components in~$\fn$.

So the following question is under consideration:

\textit{For any given gauge $\mathcal{V}$, can we write down the DS-hierarchy for $q^{\rm can}$ using only the information of~$R_a^{\rm can}$?}

\vspace{2mm}

Let us give a positive answer to this question by using our definition of tau-function.
\begin{itemize}
\item[1.]  Compute the basic resolvents $R_a^{\rm can}$, $a=1,\dots,n$.
\item[2.]  Calculate the Miura transformation $w_a \mapsto r_a$ 
 from eq.~\eqref{gen-ham}. Recall that the normal coordinates are defined by $r_a:=\langle\langle \tau_{a,0}\tau_{1,0} \rangle\rangle^{DS}.$
\item[3.]  
Calculate $\langle\langle \tau_{b,k} \tau_{a,0}\rangle\rangle^{DS}$ from eqs.~\eqref{gen-simple-2point}.
Note that the DS-flows for the normal coordinates~$r_a$ are
\be\label{normal-ru}
\frac{\p r_a}{\p T^{b}_k} = - \p_x \, \langle\langle \tau_{b,k} \tau_{a,0}\rangle\rangle^{DS},\qquad a,b=1,\dots,n,\,k\geq 0.
\ee
The right hand sides of eqs.~\eqref{normal-ru} are differential polynomials in~$w$. 
Substituting $w_a \mapsto r_a$
  in the right hand sides of eqs.~\eqref{normal-ru} we obtain the DS hierarchy for~$r_a$.
\item[4.] Substitute the inverse Miura transformation to the 
DS hierarchy for~$r_a$ we obtain the DS hierarchy.
\end{itemize} 

\setcounter{equation}{0}
\setcounter{theorem}{0}
\section{Computational aspect of resolvents} \label{essential}
\setcounter{equation}{0}
\setcounter{theorem}{0}
\subsection{The lowest weight gauge}
Recall that there is a particular choice of a gauge of DS-type \cite{BFRFW}, called
the \textit{lowest weight gauge}. Let us review its construction. Write the Weyl co-vector 
as $\rc =\sum_{i=1}^n x_i H_i$, $x_i\in \CC$ and define 
\be \label{def-I-}
I_-=2\sum_{i=1}^n x_i F_i.
\ee
Then $I_+,I_-,\rc $ generate an $sl_2(\mathbb{C})$ Lie subalgebra of $\g$:
\be
\label{SL2princ}
[\rc , I_+]= I_+,\quad [\rc ,I_-]=-I_-,\quad [I_+,I_-]=2\rc .
\ee
According to \cite{Kostant,BFRFW}  there exist elements $\gamma^1,\dots,\gamma^n\in \g$ such that
$$
{\rm Ker} \, {\rm ad}_{I_-} = {\rm Span}_{\mathbb{C}} \le\{\gamma^1,\dots,\gamma^n\ri\},\quad  [\rc ,\gamma^i]=-m_i \, \gamma^i.
$$
Since $\gamma^n\in \C \Ftheta$ we could and will normalize it to be 
\be\label{norm-gamman}\gamma^n = \Ftheta.\ee 
The subspace ${\rm Ker} \, {\rm ad}_{I_-} \subset \fb$ is a gauge of DS-type, 
which is called the lowest weight gauge.
Denote by
$$\L^{\rm can}=\p_x+\Lambda+q^{\rm can}(x)$$
the gauge fixed Lax operator associated to ${\rm Ker} \, {\rm ad}_{I_-}$, where 
$q^{\rm can}(x) := \sum_{a=1}^n u_a(x) \gamma^a.$
\bd
The functions $u_a$, $a=1,\dots,n$ are called the {\bf lowest weight coordinates}.
\ed

\setcounter{equation}{0}
\setcounter{theorem}{0}
\subsection{Extended principal gradation}

\bd
Define the {\bf extended principal degree} by the following degree assignments
\bea
&& \deg^e \p_x=1,\quad \deg^e  \lambda=h,\\
&& \deg^e u_i= m_i+1,\\
&& \deg^e E_i=1,\quad \deg^e F_i=-1, \quad i=1,\dots, n. 
\eea
\ed
\noindent
It is easy to see that, 
if $a\in L(\g)^j$  then $\deg^e a=\deg a=j.$ Namely, the extended principal degree coincides with the principal degree for any loop algebra element. In particular, 
\be
\deg^e \gamma^i=-m_i,\quad \deg^e {\rm ad}_{I_+}^j \gamma^i=-m_i+j,\qquad j=0,\dots, 2m_i.
\ee
\bl \label{degL}
For the gauge-fixed Lax operator $\L^{\rm can}$, we have $\deg^e\L^{\rm can}=1$.
\el

Let $(U^{\rm can},H^{\rm can})$ be the unique pair associated to $\L^{\rm can}$, and $R^{\rm can}_a$ the basic resolvents. 

\bl \label{deg-U-H-R}
The following formulae hold true
\be
\deg^e U^{\rm can}=0,\quad \deg^e H^{\rm can}=1,\quad  \deg^e R_a^{\rm can}= m_a,\qquad a=1,\dots,n.
\ee
\el
\begin{proof}
By using the recursion procedure \eqref{recur} and by the mathematical induction.
\end{proof}

\bc  \label{deg-F}
The $N$-point ($N\geq 2$) generating series of correlation functions  
$F_{a_1,\dots,a_N}(\lambda_1,\dots,\lambda_N;\bdT)$ are homogenous of degree $-Nh+\sum_{\ell=1}^N m_{a_{\ell}}$ w.r.t. the extended principal gradation.
\ec

\setcounter{equation}{0}
\setcounter{theorem}{0}
\subsection{Essential series of the Drinfeld--Sokolov hierarchy}

Recall that the simple Lie algebra $\g$ admits the lowest weight decomposition \cite{BFRFW}
$$\g=\bigoplus_{a=1}^n \, \mathfrak L^a,\qquad \mathfrak L^a=
{\rm Span}_{\CC} \le\{\gamma^a,\,\ad_{I_+}\, \gamma^a,\dots, \ad_{I_+}^{2m_a}\,\gamma^a\ri\},$$
where each~$\mathfrak L^a$ is an $sl_2(\CC)$-module w.r.t. the $sl_2(\CC)$ Lie subalgebra generated by
$I_+,I_-,2\rho^\vee$, called a lowest weight module. It is then clear that any 
$\mathfrak g$-valued function $R(\lambda)$ can be uniquely written as
$$
R(\lambda)=\sum_{a=1}^n \sum_{m=0}^{2m_a} \, K_{am}(\lambda) \, \ad_{I_+}^{m} \gamma^a.
$$ 

\bt \label{resolvent-reduction}
Let $\L^{\rm can}=\p_x+\Lambda+q^{\rm can}=\p_x+\Lambda+\sum_{a=1}^n \, u_a \,\gamma^a$ be a Lax operator associated to the lowest weight gauge.   
Let $R^{\rm can}\in \mathcal{A}^u\otimes \g ((\lambda^{-1}))$ be any resolvent of $\L^{\rm can}.$ Write
\be \label{dec-Rcan}
R^{\rm can}=\sum_{i=1}^n \, \mathcal{R}_i \, \ad_{I_+}^{2m_i} \gamma^i+ \sum_{i=1}^n \sum_{m=0}^{2m_i-1} \, K_{im} \, \ad_{I_+}^{m} \gamma^i.
\ee We have
1) $\forall\, i\in\{1,\dots,n\},\,m\in\{0,1,\dots,2m_i-1\}$, 
$K_{im}$ has the following expression
$$
K_{im}= \sum_{j=1}^n \sum_{\ell=0}^{2m_i-m} \left(s_{i,\ell,0}^j +\lambda \, s_{i,\ell,1}^j\right) \, \p_x^\ell \left(\mathcal{R}_j\right),
$$
where the coefficients $s_{i,\ell,0}^j, s_{i,\ell,1}^j$ belong to $\mathcal{A}^u$, and they do not depend on the choice of the resolvent. 

2) The ODE $[\L^{\rm can},R^{\rm can}]=0$ is equivalent to $n$ scalar linear ODEs for $\ERR_1,\dots,\ERR_n$.

3) The following formulae hold true for the degrees of the coefficients \eqref{dec-Rcan} of the basic resolvents
\be\label{prin-R-K}
\deg^e \mathcal{R}_{a; \,i}=m_a-m_i,\quad \deg^e K_{a; \, im}=m_a+m_i-m, \quad  i,a=1,\dots,n; \,m=0,\dots,2m_i-1.
\ee
\et

\noindent{\textit{Proof}} of Theorem~\ref{resolvent-reduction}. ~ 
Write
$$
R^{\rm can}(\lambda; u; u_x,\dots) = \sum_{i=1}^n  \sum_{m=0}^{2m_i}  K_{im}(\lambda; u; u_x,\dots) \,  {\rm ad}_{I_+}^m \gamma^i,\qquad K_{i,2m_i}:=\mathcal{R}_i.
$$
Substituting the above expressions into \eqref{ODE-R} we obtain
\be
\sum_{i=1}^n \sum_{m=0}^{2m_i}  
\frac{ \p K_{im}}{\p x}  \,  {\rm ad}_{I_+}^m \gamma^i + 
\sum_{i=1}^n \sum_{m=1}^{2m_i} K_{i,m-1}  \, {\rm ad}_{I_+}^{m} \gamma^i 
+\le[\lambda\,\gamma^n+\sum_{\ell=1}^n u_\ell \gamma^\ell, \,\sum_{i=1}^n  \sum_{m=0}^{2m_i}  
K_{im}\,  {\rm ad}_{I_+}^m \gamma^i\ri]=0.\label{106}
\ee
Introduce the lowest weight structure constants $c_{\ell ijs}^m$ by
\be \label{107}
[\gamma^\ell,\,\ad_{I^+}^m \gamma^i]=\sum_{j=1}^n \sum_{s=0}^{2m_j} \, c_{\ell ijs}^m \, \ad_{I^+}^s \gamma^j,\qquad i,\,\ell=1,\dots,n,\,m=0,\dots,2m_i.
\ee
Substituting \eqref{107} into \eqref{106} we obtain 
\bea\label{108}
&& 
\sum_{i=1}^n \sum_{m=0}^{2m_i}  
\frac{ \p K_{im}}{\p x}  \,  {\rm ad}_{I_+}^m \gamma^i + 
\sum_{i=1}^n \sum_{m=1}^{2m_i} K_{i,m-1}  \, {\rm ad}_{I_+}^{m} \gamma^i \nn\\
&& 
+\sum_{\ell=1}^n \sum_{i=1}^n  \sum_{m=0}^{2m_i}\sum_{j=1}^n \sum_{s=0}^{2m_j} \, \wt{u_\ell} \, K_{im} \, c_{\ell ijs}^m \, \ad_{I^+}^s \gamma^j=0,
\eea
where $\wt{u_\ell}=u_\ell+\lambda\,\delta_{\ell,n}$.
It follows that
\be
K_{j,s-1}+\frac{ \p K_{js}}{\p x}+\sum_{\ell=1}^n \sum_{i=1}^n  \sum_{m=0}^{2m_i} \, \wt{u_\ell} \, K_{im} \, c_{\ell ijs}^m=0,\qquad  j=1,\dots,n,\, s=0,\dots,2m_j.
\ee
Here $K_{j,-1}:=0$.
Noting that the structure constant $c_{\ell ijs}^m$ are zero unless 
\be
0\leq m=m_i+m_\ell+s-m_j\leq 2m_i.
\ee
Hence we obtain
\be\label{Kjs-1}
K_{j,s-1}=-\frac{ \p K_{js}}{\p x} - \sum_{\ell, \, i=1 \atop m_i\geq |m_\ell+s-m_j|}^n 
\wt{u_\ell} K_{i,m_i+m_\ell+s-m_j}  c_{\ell ijs}^{m_i+m_\ell+s-m_j} , \quad  j=1,\dots,n,~s=0,\dots,2m_j.
\ee
Define an ordering for pairs of integers $\{(j,s) \,| \,j=1,\dots,n,\,s=0,\dots,2m_j\}:$ we say
$(j_1,s_1)>(j_2,s_2)$, if $s_1>s_2$, or $s_1=s_2$ and $j_1<j_2$.
Noting that 
$K_{i,2m_i}:=\mathcal{R}_i$ we can use~\eqref{Kjs-1} to 
solve out~$K_{j,s-1}$ in terms of~$\ERR_j$ and their $x$-derivatives 
starting from the largest pair $(j,s-1)=(n,2m_n-1)$ to the smallest pair $(j,s-1)=(n,0)$.
This proves Part~1) of the theorem.

Taking $s=0$ in~\eqref{Kjs-1} we obtain the system of ODEs for $\ERR_1,\dots,\ERR_n$, which proves Part~2).

Formulae~\eqref{prin-R-K} follow from Lemma~\ref{deg-U-H-R} 
and eq.~\eqref{dec-Rcan}, which proves Part~3).
$\hfill\Box$

\bd
We call $\ERR_{a;1},\dots, \ERR_{a;n}$ the {\it  essential series} of the DS hierarchy of the $\mathfrak g$-type.
\ed
Using the same argument as in \cite{BDY2}, the essential series $\ERR_{a;a}$ never vanishes.
\bd
We call $\ERR_{a;a}$ the fundamental series of the DS hierarchy.
\ed

\setcounter{equation}{0}
\setcounter{theorem}{0}
\section{Proof of Theorem \ref{topo-thm}} \label{main proof}
\setcounter{equation}{0}
\setcounter{theorem}{0}
\subsection{Relation between normal coordinates and lowest weight coordinates}\label{NL-Miura}
The concept of normal coordinates was introduced in \cite{DZ-norm}; see also \cite{DLYZ1}.
\bd \label{normal-def}
We call $r_a:=\langle\langle\tau_{a, 0} \tau_{1,0}\rangle\rangle^{DS}$ the normal coordinates of the DS hierarchy.
\ed
Recall that
$$
\Lambda_{m_a}(\l) =  L_{m_a} +  \l  \, K_{m_a-h} , \qquad L_{m_a} \in \g^{m_a} , ~ K_{m_a-h} \in \g^{m_a-h}.
$$
Using the commutativity between $\Lambda_{m_1},\dots, \Lambda_{m_n}$  
along with the normalization \eqref{norm-Lambda-2}
we have 
\bea 
&\& [L_{m_a},L_{m_b}]=0,\quad [K_{m_a-h},K_{m_b-h}]=0,\\
&\& [K_{m_a-h}, L_{m_b}] + [L_{m_a}, K_{m_b-h}]=0 \label{comm-KL}
\eea
and~\eqref{LK}.
Note that $L_{m_1}=I_+$,  we have in particular 
\be
[I_+, L_{m_a}] =0, \quad \forall \, a=1,\dots, n.
\label{hwv}
\ee
Therefore, the elements~$L_{m_a}$ are the highest weight vectors of the 
lowest weight module~$\L^a$, i.e., 
$$L_{m_a} = {\rm const} \,  \ad _{I_+}^{2m_a} \gamma^a,\quad {\rm const}\neq 0.$$ 
 
\begin{lemma} \label{norm-g-lem}
The lowest weight vectors $\gamma^a$  can be normalized such that
\be
 \label{normgamma}
 (\gamma^a \,|\, L_{m_a})=1.
\ee
\end{lemma}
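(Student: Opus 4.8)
The plan is to reduce the lemma to the single assertion that $(\gamma^a\,|\,L_{m_a})\neq 0$. Once that is known, replacing $\gamma^a$ by $\gamma^a/(\gamma^a\,|\,L_{m_a})$ produces \eqref{normgamma}; note that $L_{m_a}$ is not affected by this rescaling, since it is pinned down by the normalization \eqref{norm-Lambda-2} of $\Lambda_{m_a}$ (and in fact for $a=n$ the normalization is automatic, because $\gamma^n=\Ftheta=K_{m_1-h}$ and \eqref{re-KL} with $b=1$ gives $(\gamma^n\,|\,L_{m_n})=(L_{m_n}\,|\,K_{m_1-h})=\eta_{n1}m_1=1$, consistent with \eqref{norm-gamman}). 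To prove the nonvanishing, I would first observe that $\ad_{I_+}^k\gamma^a$ has principal degree $-m_a+k$, so Lemma \ref{van-Cartan} gives $(\gamma^a\,|\,\ad_{I_+}^k\gamma^a)=0$ for $0\leq k<2m_a$; since $L_{m_a}\in\mathbb{C}\,\ad_{I_+}^{2m_a}\gamma^a$, this shows that $(\gamma^a\,|\,L_{m_a})=0$ is equivalent to $\gamma^a\perp\mathfrak L^a$.

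Thus it suffices to show that $\gamma^a$ cannot be orthogonal to all of $\g$, and the key input is that the lowest-weight decomposition $\g=\bigoplus_{b=1}^n\mathfrak L^b$ is orthogonal with respect to $(\cdot\,|\,\cdot)$. I would establish this via the Casimir operator $\widehat C:=\frac12(\ad_{I_+}\ad_{I_-}+\ad_{I_-}\ad_{I_+})+\ad_{\rc}^2$ of the principal $sl_2(\mathbb C)$ of \eqref{SL2princ}: each of $\ad_{I_+},\ad_{I_-},\ad_{\rc}$ is skew with respect to the invariant form, hence $\widehat C$ is self-adjoint, while it acts on the irreducible module $\mathfrak L^b$ as the scalar $m_b(m_b+1)$. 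Therefore $\mathfrak L^a\perp\mathfrak L^b$ whenever $m_a\neq m_b$. For every simple $\g$ other than $D_{2k}$ the exponents are pairwise distinct, so $\g=\bigoplus_b\mathfrak L^b$ is an orthogonal decomposition; the nondegenerate form then restricts nondegenerately to $\mathfrak L^a$, and together with the degree count of the previous paragraph this forces $(\gamma^a\,|\,L_{m_a})\neq 0$.

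It remains to treat $\g=D_{2k}$, where the exponent $m:=2k-1$ occurs twice, say $m_a=m_b=m$. Here $\widehat C$ still shows that $\mathfrak L^a\oplus\mathfrak L^b\perp\mathfrak L^c$ for all other $c$, so the form is nondegenerate on $\mathfrak L^a\oplus\mathfrak L^b$; restricting attention to its principal-degree $-m$ subspace ${\rm Span}_{\mathbb C}\{\gamma^a,\gamma^b\}$ and its principal-degree $+m$ subspace ${\rm Span}_{\mathbb C}\{L_{m_a},L_{m_b}\}$ (the latter two are linearly independent, as is readily checked from \eqref{hwv} and the independence of $\Lambda_{m_a},\Lambda_{m_b}$), the induced pairing is a nondegenerate pairing of two-dimensional spaces. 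Choosing $\gamma^a,\gamma^b$ to be the basis dual to $\{L_{m_a},L_{m_b}\}$ under this pairing — a choice that preserves both membership in $\Ker\,\ad_{I_-}$ and the principal degree $-m$ — achieves $(\gamma^a\,|\,L_{m_a})=(\gamma^b\,|\,L_{m_b})=1$ simultaneously. I expect the only genuine obstacle to be the orthogonality of the $\mathfrak L^b$, i.e.\ the self-adjointness of the principal Casimir together with the $D_{2k}$ bookkeeping (a point already implicit in Kostant's analysis of the principal $sl_2$); everything else is routine tracking of principal degrees.
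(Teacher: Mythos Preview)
Your proof is correct and follows the same overall strategy as the paper: reduce to $(\gamma^a\,|\,L_{m_a})\neq 0$ by showing that the Cartan--Killing form is nondegenerate on each $sl_2$-submodule $\mathfrak L^a$ and that $\gamma^a$ is orthogonal to every basis vector of $\mathfrak L^a$ except the top one. The differences are of detail rather than idea. First, to kill the lower pairings the paper uses the lowest-weight property $[I_-,\gamma^a]=0$ together with $\ad$-invariance, writing $(\gamma^a\,|\,\ad_{I_-}^k L_{m_a})=0$ for $k\geq 1$; you instead invoke the principal grading via Lemma~\ref{van-Cartan} to get $(\gamma^a\,|\,\ad_{I_+}^k\gamma^a)=0$ for $k<2m_a$. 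Both arguments are one-liners. Second, the paper simply asserts that distinct irreducible $sl_2$-submodules are orthogonal and deduces nondegeneracy on $\mathfrak L^a$; you supply the Casimir self-adjointness argument, which is a genuine addition. Third, you handle the $D_{2k}$ repeated-exponent case explicitly by passing to the dual basis of $\{L_{m_a},L_{m_b}\}$, whereas the paper's proof is silent on this point and only addresses it in a remark after equation~\eqref{orthonormal}; your treatment here is more complete, and your observation that this requires a basis change rather than a mere rescaling is exactly what that remark is acknowledging. The $a=n$ case is handled identically in both, via~\eqref{re-KL}.
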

\begin{proof}
We know that different irreducible representations of $sl_2(\CC)$ are orthogonal w.r.t. to $(\cdot|\cdot)$ and, hence, the nondegeneracy of $(\cdot|\cdot)$ 
implies the nondegeneracy of its restriction to each irreducible representation. 
Note that 
$$\Bigl(\gamma^a \mid  \ad_{I_-}^k L_{m_a}\Bigr)=- \Bigl(I_- \mid [\gamma^a, \ad_{I_-}^{k-1} L_{m_a}]\Bigr)=0,\quad \forall \,k\in \{1,\cdots, 2m_a\}.$$ 
So $(\gamma^a  \mid L_{m_a})\neq 0$ since otherwise we obtain a contradiction with the nondegeneracy of $(\cdot|\cdot)$.
Hence for $a=1,\cdots,n-1$,
we can normalize $\gamma^a$ such that
$(\gamma^a \,|\, L_{m_a})=1$.
Particular consideration must be addressed for $\gamma^n$, since we have already
 defined $\gamma^n=E_{-\theta}$.
Taking in~\eqref{LK} $a=n$, $b=1$ we obtain
$$(L_{m_n}\,|\,K_{m_1-h})=1 ~ \Rightarrow ~ (L_{m_n}\,|\,E_{-\theta})=1,$$
which finishes the proof.
\end{proof}

From now on we fix a choice of $\gamma^1,\cdots,\gamma^n$ satisfying \eqref{normgamma}. 
Then Lemmas~\ref{van-Cartan}, \ref{norm-g-lem} imply
\be\label{orthonormal}
(\gamma^a\,| \, L_{m_b}) = \delta^a_b.
\ee
Here it should be noted that for the case of $D_n$ with~$n$ even 
with appearance of an equal pair of exponents $m_{n/2}=m_{n/2+1}$, 
 eq.~\eqref{orthonormal} is valid under a suitable choice of 
$\gamma^{n/2}$, $\gamma^{n/2+1}$.

According to Corollary~\ref{deg-F} and Theorem~\ref{main thm}, 
$\langle\langle\tau_{a, k} \tau_{1,0}\rangle\rangle$ are differential polynomials in~$u$, 
homogeneous of degree $$m_a+1+  k h$$
w.r.t. to~$\deg^e$. 
In particular, we have
$$
\deg^e r_a = m_a+1.
$$
We arrive at the following lemma.
\bl \label{Miura-u-w}
There exists a Miura transformation $u\rightarrow r$  of the form
\be \label{Miura-transformation1}
r_a=  c_a u_a + P_a \le[u_1,\dots,u_{a-1}\ri]
\ee
for some non-zero constants $c_a$, where $P_a$ are differential polynomials in $u_1,\dots,u_{a-1}$ satisfying
\be
\deg^e P_a \le[u_1,\dots,u_{a-1}\ri]=m_a+1.
\ee 
\el
\br
For $D_{n}$ with $n$ even, Lemma~\ref{Miura-u-w} is valid under a suitable 
choice of $\gamma^{n/2}$, $\gamma^{n/2+1}$.
\er
\br
The inverse Miura transformation has the form
\be
u_a = c_a^{-1}  r_a + \wt P_a \le[ r_1,\dots, r_{a-1}\ri],
\ee
thanks to the triangular nature of the transformation~\eqref{Miura-transformation1}.
\er
\bl \label{lemmaconst}
The constants $c_a$ in Lemma \ref{Miura-u-w} have the following explicit expressions
\be
c_a = - \frac {m_a}{h}.
\ee
\el
\noindent {\bf Proof.}
Fix $a\in \{ 1,\dots, n\}$. We are to compute $r_a|_{u_1,\dots,u_{a-1}\equiv 0}$. 
Assume $u_1\equiv 0$, $\cdots$, $u_{a-1}\equiv 0$.
Looking at equation \eqref{recurrUH} for the pair $(U,H)$ we obtain
$$U^{[-1]}=\cdots= U^{[-m_a]}=0=\HH^{[-1]}=\cdots=\HH^{[1-m_a]}.$$ 
The first nontrivial equation arises from the component of principal degree $-m_a$ in \eqref{recurrUH}:
\be
\HH^{[-m_a]}  + \le[  U^{[-m_a-1]}, \Lambda\ri] = u_a \gamma^a\qquad (\mbox{no summation over~} a).
\label{227}
\ee
Let us decompose the elements $\HH^{[-m_a]} , U^{[-m_a-1]}$ as follows
\bea
&& \HH^{[-m_a]}  = \frac   {g_a(x)} \l \, \Lambda_{h-m_a}= g_a(x) \,  K_{-m_a} + \frac {g_a(x)} \l \, L_{h-m_a},\quad a=1,\dots,n, \nn \\
&& U^{[-m_a-1]} = \frac 1 \l Y_{h-m_a-1} + W_{-m_a-1},\qquad a=1,\dots, n-1, \nn \\
&& U^{[-m_n-1]} = \frac 1 \l Y_{0}. \nn
\eea
Substituting these expressions in \eqref{227} and comparing  the coefficients of powers of $\l$ we obtain
\bea
& \l^{-1}: & g_a(x)\, L_{h-m_a} + [Y_{h-m_a-1} , I_+]  =0, \label{mmeq-1} \\
& \l^{0}: & g_a(x)\,  K_{-m_a}  + [Y_{h-m_a -1}, \Ftheta]  + [W_{-m_a-1}, I_+]= u_a \gamma^a, \label{mmeq0}\\
& \l^{1}: & [W_{-m_a-1}, \Ftheta]=0 \hbox {\qquad(automatic!).} \label{mmeq1}
\eea
Since~$L_{h-m_a}$ is the highest weight vector of the irreducible $sl_2(\mathbb{C})$-module~$\L^{n+1-a}$,
the solution to eq.~\eqref{mmeq-1} is
$$
Y_{h-m_a -1} =\frac {g_a(x)} {2(h-m_a) }  [I_-, L_{h-m_a}] + f(x) L_{h-m_a -1}
$$
for some function~$f(x)$ which is a differential polynomial in~$u$. 
Here $L_{h-m_a-1}$ is defined to be $0$ if $h-m_a-1$ is not an exponent. 
We thus have
\bea
 [Y_{h-m_a-1} , \Ftheta] 
&=& \frac {g_a(x)}{2(h-m_a)} \le[I_-,[L_{h-m_a }, \Ftheta]\ri] +f(x)\, [L_{h-m_a -1}, \Ftheta] \nn\\
&\ds \mathop{=}^{\eqref{comm-KL}}&
 \frac {g_a(x)} {2(h-m_a)}[I_-,[I_+, K_{-m_a}]]+ f(x) \, [L_{h-m_a -1}, \Ftheta].
 \label{QF}
\eea
Plugging \eqref{QF} into \eqref{mmeq0} we find
\bea
 g_a(x) K_{-m_a}  + \frac {g_a(x)} {2(h-m_a) }[I_-,[I_+, K_{-m_a}]] + [W_{-m_a-1}, I_+]
 +f(x)\le[L_{h - m_a -1}, \Ftheta\ri]= u_a \gamma^a. \nn
\eea
Employing the Jacobi identity we obtain
$$
 g_a(x) \, \frac {h}{h-m_a } \, K_{-m_a}+ \le [I_+ \, , \,\frac {g_a(x)} {2(h-m_a) }[I_-, K_{-m_a}] -W_{-m_a-1} \ri] 
 +  f(x) \le[L_{h - m_a -1}, \Ftheta\ri]= u_a \gamma^a.
$$
Taking the inner products of both sides of the above equation with $L_{m_a}$
we have
\be
\Biggl(L_{m_a} \bigg |  \, \frac {h \, g_a(x) }{h-m_a } \, K_{-m_a}+ \le [I_+ \,,\,\frac {g_a(x) \, [I_-, K_{-m_a}]}{2(h-m_a) } 
-W_{-m_a-1} \ri] + f(x) \le [L_{h - m_a -1}, \Ftheta \ri] \Biggr) = u_a \le(L_{m_a}  | \gamma^a\ri). \nn
\ee
Noticing that~$L_{m_a}$ is a highest weight vector of the $sl_2(\CC)$-module~$\L^a$, i.e.,
$$
[L_{m_a},I_+]=0,\qquad [L_{m_a}, L_{h - m_a -1}]=0,
$$
and using \eqref{LK}, \eqref{normgamma} we obtain
$$
g_a(x)  
= \frac {h-m_a}{h \le(L_{m_a}  | K_{-m_a} \ri) } \, {\le(L_{m_a} | \gamma^a\ri)} \, u_a(x) =   
\frac {1}{h}  u_a(x) .
$$

Using Definition~\ref{normal-def} and eq.~\eqref{gen-ham} we have
$$
- r_a =  \res{\l=\infty} \Big(e^{U} \Lambda_{m_a} e^{-U} \Big| \Ftheta   \Big) 
= \res{\l=\infty}\biggl( \Lambda_{m_a}(\l) \Big| 
\Ftheta - \le[U(\l),\Ftheta\ri] + \frac  1 2 \le[U(\l),\le[U(\l),\Ftheta\ri]\ri] + \cdots\biggr).
$$
The only possible contribution to the residue comes from the terms of principal degree $-h-m_a$ and the first one in the series is easily seen to be residueless 
$$\res{\l=\infty} \le(\Lambda_{m_a}(\l)|\Ftheta\ri) d\lambda=0.$$
Note that we have already shown that $U$ 
has the form
\bea
U= U^{[-m_a-1]}  + \sum_{j\leq -m_a-2} U^{[j]}. \nn
\eea
Therefore only the very next term $- \le(\Lambda_{m_a}(\l)  \mid [ U(\l), \Ftheta]\ri)$ can contribute to the residue. Thus 
\be
r_a = \res{\l=\infty}  \le(  \Lambda_{m_a}(\l) \mid [U(\l),\Ftheta]  \ri) = \res{\l=\infty} \le( \Lambda_{m_a}(\l) \mid  [U^{[-m_a-1]}(\l),\Ftheta] \ri). \label{230}
\ee
Now 
substituting 
\be
\Lambda_{m_a}(\l) =  \l  \, K_{m_a-h} +  L_{m_a},\quad U^{[-m_a-1]} = \frac 1 \l \, Y_{h-m_a-1} + W_{-m_a-1}\nn
\ee
in \eqref{230}
we obtain
\bea
- r_a(x) &=&  \le( L_{m_a} \bigg|  [Y_{h-m_a -1},\Ftheta] \ri) 
 =  \le( L_{m_a} \bigg|  \le [\frac {g_a(x)} {2(h-m_a)}  [I_-, L_{h-m_a}] + f(x)\, L_{h - m_a -1},\Ftheta \ri] \ri) \nn\\
&=&
  \frac {g_a(x)}{2(h-m_a) }\le( \, L_{m_a} \,| \,  \le [  [\Ftheta \,,\, L_{h-m_a}]\,,\,I_- \ri] \,\ri) 
 \nn \\  
 & =&
  \frac {g_a(x)}{2(h-m_a) }\le( \, L_{m_a} \, | \,  \le [  [K_{-m_a}\,,\,I_+] \, , \,I_- \ri] \, \ri) 
 =
  \frac {g_a(x)}{2(h-m_a)}\le( \, \le[ I_+\,,\, [I_-\,,\, L_{m_a}]\ri] \,|\,   K_{-m_a} \, \ri) \nn \\
&=&
  g_a(x)\, \frac {m_a}{h-m_a} \le( \,L_{m_a} \,|\,   K_{-m_a} \,\ri)= \frac {m_a}{h}\,u_a(x).\nn
 \eea
The lemma is proved. \QED

\br
For the particular $A_n$ case, a similar lemma on relations between normal coordinates and Wronskian-gauge coordinates was obtained for example in~\cite{BY} (see Lemma~3.1 therein). However, except the $A_1$ case,
 the Wronskian-gauge coordinates are {\it not} the lowest weight coordinates.
\er

\setcounter{equation}{0}
\setcounter{theorem}{0}
\subsection{Partition function and topological ODE} \label{topo}
Recall that the partition function~$Z$ of the DS hierarchy of $\g$-type is a particular tau-function specified (up to a constant factor) by the string 
equation \eqref{string}.  The compatibility between the string equation and the DS hierarchy follows from the fact that 
the flow $\p_{s_{-1}}$ defined via
$$\p_{s_{-1}} \tau := \sum_{a=1}^n \sum_{k\geq 0} t^a_{k+1} \frac{\p \tau}{\p t^a_k}+\frac12 \sum_{a,b=1}^n \, \eta_{ab} \,  t^{a}_0  \, t^b_0 \tau - \frac{\p \tau}{\p t^1_0}$$
gives rise to an additional infinitesimal symmetry of the DS hierarchy.
 
The function $u=u(\bdT)=u(\bdt)$ associated to $Z(\bdt)$ 
is called the topological solution to the lowest-weight-gauge DS hierarchy, and
$r=r(\bdt)=r(\bdT)$ 
the topological solution in normal coordinates.

\bl \label{w-ini}
The normal coordinates associated to the partition function $Z$ satisfy
\be
r_{a}(\bdt)\big|_{t^{b}_k=\delta^b_1 \delta_{k,0} t^1_0} = 
- \delta_{a,n}  \frac{h-1}{h \kappa} t^1_0,\qquad \kappa:=\sqrt{-h}^{-h}.
\ee
\el
\begin{proof}
By applying the $t^a_0$-derivative on both sides of  eq. \eqref{string} we have
$$
\frac{\p^2 \log Z}{\p t^1_0 \, \p t^a_0}\Big|_{t^b_k=\delta^b_1 \delta_{k,0} t^1_0}=\delta_{a,n}  t^1_0.
$$
Hence from \eqref{norm-tT} we obtain
$$
\frac{\p^2 \log Z}{\p T^1_0 \, \p T^a_0}\Big|_{t^b_k=\delta^b_1 \delta_{k,0}  t^1_0}= 
- \delta_{a,n}  \frac{h-1}{h} \sqrt{-h}^h  t^1_0.
$$
The lemma is proved.
\end{proof}

\bl \label{u-ini} 
The topological solution to the lowest-weight-gauge DS hierarchy of $\g$-type satisfies
\be
u_a(\bdt)\big|_{t^b_k=\delta^b_1 \delta_{k,0}  t^1_0}=  \frac{\delta_{a,n}} {\kappa}  t^1_0.
\ee
\el
\begin{proof}
By applying Lemma \ref{Miura-u-w}, Lemma \ref{lemmaconst} and Lemma \ref{w-ini}.
\end{proof}

\paragraph{Topological ODE of $\g$-type.}  Let $u=u(\bdT)=u(\bdt)$ be the topological solution 
to the lowest-weight-gauge DS hierarchy, and $R^{\rm can}_{a}(\l;\bdt)$ be the basic resolvents of $\L^{\rm can}$ (see Definition \ref{resol-defi}). Note that 
$$t^1_0=-T^1_0=x.$$
Define $$R_{a}^{\rm can}(\lambda, x) =   \lambda^{-\frac{m_a} h} \, R^{\rm can}_a(\l;\bdt) \big|_{t^{b}_k= x\,\delta^b_1 \, \delta_{k,0} }, \qquad a=1,\dots,n\,.$$ 
Clearly, $R^{\rm can}_{a}(\l,x)$ is the unique solution to \eqref{ODE-R}--\eqref{convention-R} with $\L$ replaced by 
$\L^{\rm can}=\p_x + \Lambda + \frac{x}{\kappa}E_{-\theta}$.

\bl  [Key Lemma] \label{key lemma} The following formulae hold true
\be \label{x-l}
\p_x\,\bigl(R_{a}^{\rm can}\bigr)= \frac1 \kappa  \, \p_\lambda\, \bigl(R_{a}^{\rm can}\bigr),\qquad a=1,\dots,n.
\ee
\el
\begin{proof} 
For each $a\in\{1,\dots,n\}$, let $M_a^*(\lambda)$ be the unique solution to the topological ODE \eqref{topoode} satisfying
$$
M_a^*(\l)=\lambda^{-\frac{m_a}h} \left[ \Lambda_{m_a}(\l)+ \mbox{lower degree terms w.r.t. } \deg\right].
$$ 
See in \cite{BDY2} for the proof of existence and uniqueness of $M_a^*(\lambda)$.
Define $R_a^{\rm can,*}(\lambda,x)= \lambda^{\frac{m_a}h} M_a^*(\lambda+\frac x \kappa)$.
Then $R_a^{\rm can,*}$ satisfies equations \eqref{ODE-R}--\eqref{convention-R} with $\L=\p_x + \Lambda + \frac{x}{\kappa}\gamma^n$.  
Hence the uniqueness statement of Proposition \ref{boundary} implies that $R_a^{\rm can}(\lambda,x)\equiv R_a^{\rm can,*}(\lambda,x)$, $a=1,\dots,n$.
The lemma is proved.
\end{proof}

\noindent {\it Proof}~of~Theorem~\ref{topo-thm}. \quad Note that $M_a(\lambda):=R^{\rm can}_a(\lambda;x=0).$
So from the above proof of Lemma \ref{key lemma} we already see that $M_a$ satisfies the 
topological ODE~\eqref{topoode}.
The theorem is proved.
\hfill$\Box$

\noindent{\it Proof}~of~Theorem~\ref{N-DS-p-real}.\quad 
By using Theorem-ADE, Theorem-BCFG, Theorem~\ref{main thm}, 
and by using Theorem~\ref{topo-thm} we obtain 
\bea
&& 
 (\kappa\,\sqrt{-h})^N \sum_{g,k_1,\cdots,k_N\geq 0} (-1)^{k_1+\dots+k_N}\prod_{\ell=1}^N 
\frac{\left( \frac{m_{i_\ell}}{h}\right)_{k_\ell+1}}{\left( \kappa \, \widetilde \lambda_\ell\right)^{\frac{m_{i_\ell}}{h}+k_\ell+1}} \langle\tau_{i_1 k_1} \dots \tau_{i_N k_N}\rangle^{\g}_g \nn\\
&=& - \frac 1 {2 N\, h^\vee} \sum_{s \in S_N} 
\frac {B\le( \widetilde M_{i_{s_1}} \le( \widetilde \lambda_{s_1}\ri),\dots, \widetilde M_{i_{s_N}}\le(\widetilde \lambda_{s_N}\ri) \ri) }{\prod _{j=1}^{N} \le(\widetilde \lambda_{s_j}- \tilde \lambda_{s_{j+1}}\ri)}\nn\\
&\&  \qquad\qquad\qquad\qquad\qquad -\delta_{N2} \, \eta_{i_1 i_2}  \frac {\widetilde \lambda_1^{-\frac {m_{i_1}}h } \widetilde \lambda_2^{- \frac {m_{i_2}}h} \le(m_{i_1} \, \widetilde \lambda_1 + m_{i_2} \, \widetilde \lambda_2\ri)}
{\Bigl(\widetilde \lambda_1 - \widetilde \lambda_2\Bigr)^2}, \quad N\geq 2, \label{previous-version-N}
\eea
where $\widetilde M_a = \widetilde M_a\bigl(\tilde \lambda\bigr)$, $a=1, \dots, n$ are the unique solutions to
\bea
&& \frac{{\rm d} \widetilde M}{{\rm d} \tilde \lambda}= \kappa \, \le[\widetilde{M},\Lambda\bigl(\tilde\lambda\bigr)\ri],\quad \kappa=\le(\sqrt{-h}\ri)^{-h},\nn\\
&& \widetilde M_a\bigl(\tilde \lambda\bigr) = \tilde \lambda^{-\frac{m_a}h} \Bigl[ \Lambda_{m_a}
\bigl(\tilde \lambda\bigr) + \mbox{lower degree terms w.r.t.~} \deg\Bigr].\nn
\eea
Now consider the following conjugation of $\widetilde M_a$ together with a rescaling in $\widetilde\lambda:$
\bea
&& M_a(\lambda) = \sigma^{\rho^\vee} \, \widetilde M_a \bigl(\tilde \lambda\bigr) \, \sigma^{-\rho^\vee}, \nn\\
&& \lambda = \sigma^{-h} \,  \tilde \lambda, \nn
\eea
where $\sigma:=\kappa^{-\frac1{h+1}}$. It is straightforward to check that
\bea
&&  \frac{{\rm d} M}{{\rm d} \lambda} =   [ M,\Lambda(\lambda)],\nn\\
&&  M_a(\lambda) =  \lambda^{-\frac{m_a}h} 
\Bigl[ \Lambda_{m_a}(\lambda)+ \mbox{lower degree terms w.r.t.~} \deg\Bigr].\nn
\eea
Combining with \eqref{previous-version-N}, this proves the validity of the formula \eqref{24-FJR}.
To prove formula \eqref{N=1-FJR}, one further needs to observe the following identity obtained from the string equation \eqref{string}
$$
\langle\tau_{a, k+1}\tau_{1,0} \rangle^{FJRW-\g}=\langle\tau_{ak}\rangle^{FJRW-\g}, \quad a=1,\dots,n, ~ k\geq 0.
$$
The rest of proving \eqref{N=1-FJR} follows from the identity~\eqref{gen-ham} 
and the above conjugation of~$\widetilde M_a$ with the rescaling in~$\tilde \lambda$.
\hfill$\Box$

\noindent \textit{Proof} of Theorem~\ref{r-spin-thm}. 
The theorem is a particular case of Theorem~\ref{N-DS-p-real} (cf. Remark~\ref{remarkremark}) with the particular realization of~$A_n$ Lie algebra being consistent with normalization of flows suggested by Witten~\cite{W2}.
\QED

\begin{example}[Rationality of Witten's $r$-spin intersection numbers]
It is known that Witten's $r$-spin intersection numbers are 
rational numbers. Let us verify the rationality through \eqref{easy-1} and \eqref{r-spin-N-new}. Indeed, our definition of $N$-point $r$-spin correlators reads
\bea
F^{r-spin}_{a_1,\dots,a_N}(\lambda_1,\dots,\lambda_N)
&=& \left(\kappa^{\frac1{r+1}}\,\sqrt{-r}\right)^N \sum_{k_1,\dots,k_N\geq 0}   \prod_{\ell=1}^N \frac{ (-1)^{k_\ell}   \left(\frac{a_\ell}{r}\right)_{k_\ell+1}}{ (\kappa^{\frac1{r+1}} \,  \lambda_\ell)^{\frac{{a_\ell}}r+k_\ell+1}} \langle\tau_{a_1 k_1} \dots \tau_{a_N k_N}\rangle^{r{\rm-spin}} \nn \\ 
&=& \sum_{g\geq 0} (-r)^{g-1+N}   \sum_{k_1,\dots,k_N\geq 0}   \prod_{\ell=1}^N \frac{ (-1)^{k_\ell}   \left(\frac{a_\ell}{r}\right)_{k_\ell+1}}{ \lambda_\ell^{\frac{{a_\ell}}r+k_\ell+1}} 
\langle\tau_{a_1 k_1} \dots \tau_{a_N k_N}\rangle^{r{\rm-spin}}_g,  \nn 
\eea
where we have used $\kappa=\sqrt{-r}^{-r}$ and the dimension-degree matching~\eqref{dim-degree}. 
Clearly, all the coefficients are rational. On the other hand, 
the right hand side of~\eqref{easy-1} or of~\eqref{r-spin-N-new} 
belongs to $\mathbb{Q}[[\lambda_1^{-1/r}, \dots,\lambda_N^{-1/r}]]$ as our regular solutions~$M_a(\lambda)$, $a=1,\dots,n$ to the topological ODEs of $sl_n(\mathbb{C})$-type \eqref{topo-An}
are of rational coefficients. 
The rationality of $r$-spin correlators is verified. 
\end{example}

\appendix

\setcounter{equation}{0}
\setcounter{theorem}{0}
\section{$3$-spin} \label{3spinM}
The matrices $M_i(\lambda),\,i=1,2$ for the Witten's $3$-spin invariants have the following explicit expressions. 
Denote $M_i(\lambda)=(M_i(\lambda)^a_b)_{a,b=1,\dots,3}.$ Then we have

\begin{center}
\resizebox{0.7\textwidth}{!}{
\begin{minipage}{1\textwidth}
$$
(-M_1)^1_1 = \sum_{g\geq 0} \frac{(-1)^g \Gamma(8g+\frac43)}{12^{3g} \, g! \, \Gamma(g+\frac13)}  \lambda^{-\frac{24g+4}3} -\frac{1}{72}  \sum_{g\geq 0} \frac{ (-1)^g \Gamma(8g+\frac{16}3)}{12^{3g} \, g! \, \Gamma(g+\frac43)}  \lambda^{-\frac{24g+16}3}
$$
$$
(-M_1)^1_2= - \sum_{g\geq 0} \frac{(-1)^g \Gamma(8g+\frac{1}3)}{12^{3g} \, g! \, \Gamma(g+\frac1 3)}  \lambda^{-\frac{24g+1}3} + \frac1 {24} \sum_{g\geq 0} \frac{(-1)^g \Gamma(8g+\frac{13}3)}{12^{3g} \, g! \, \Gamma(g+\frac4 3)}  \lambda^{-\frac{24g+13}3}
$$
$$
(-M_1)^1_3= - \frac1{12}  \sum_{g\geq 0} \frac{(-1)^g \Gamma(8g+\frac{10}3)}{12^{3g} \,g! \,\Gamma(g+\frac43)}  \lambda^{-\frac{24g+10}3}
$$
$$
(-M_1)^2_1= \sum_{g\geq 0} \frac{(-1)^g \Gamma(8g+\frac{7}3)}{12^{3g} \, g! \,  \Gamma(g+\frac13)}  \lambda^{-\frac{24g+7}3} - \frac1{12} \sum_{g\geq 0} \frac{(-1)^g \Gamma(8g+\frac{10}3)}{12^{3g} \, g!  \, \Gamma(g+\frac43)}  \lambda^{-\frac{24g+7}3} - \frac{1}{72} \sum_{g\geq 0} \frac{(-1)^g \Gamma(8g+\frac{19}3)}{12^{3g} \, g! \, \Gamma(g+\frac4 3)}  \lambda^{-\frac{24g+19}3}
$$
$$
(-M_1)^2_2=  \frac{1}{36}  \sum_{g\geq 0} \frac{ (-1)^g \Gamma(8g+\frac{16}3)}{12^{3g} \, g! \, \Gamma(g+\frac43)}  \lambda^{-\frac{24g+16}3}
$$
$$
(-M_1)^2_3= - \sum_{g\geq 0} \frac{(-1)^g \Gamma(8g+\frac{1}3)}{12^{3g} \, g! \, \Gamma(g+\frac1 3)}  \lambda^{-\frac{24g+1}3} - \frac1 {24} \sum_{g\geq 0} \frac{(-1)^g \Gamma(8g+\frac{13}3)}{12^{3g} \, g! \, \Gamma(g+\frac4 3)}  \lambda^{-\frac{24g+13}3}
$$
$$
(-M_1)^3_1= - \frac{1}{72}\sum_{g\geq 0} \frac{(-1)^g \Gamma(8g+\frac{22}3)}{12^{3g} \, g! \, \Gamma(g+\frac4 3)}  \lambda^{-\frac{24g+22}3} - \sum_{g\geq 0} \frac{ (-1)^g \Gamma(8g+\frac{1}3)}{12^{3g} \, g! \, \Gamma(g+\frac1 3)}  \lambda^{-\frac{24g-2}3}
$$ 
$$
(-M_1)^3_2= \sum_{g\geq 0} \frac{(-1)^g \Gamma(8g+\frac{7}3)}{12^{3g} \, g! \,  \Gamma(g+\frac13)}  \lambda^{-\frac{24g+7}3} - \frac1{12} \sum_{g\geq 0} \frac{(-1)^g \Gamma(8g+\frac{10}3)}{12^{3g} \, g!  \, \Gamma(g+\frac43)}  \lambda^{-\frac{24g+7}3} + \frac{1}{72} \sum_{g\geq 0} \frac{(-1)^g \Gamma(8g+\frac{19}3)}{12^{3g} \, g! \, \Gamma(g+\frac4 3)}  \lambda^{-\frac{24g+19}3}
$$
$$
(-M_1)^3_3= - \sum_{g\geq 0} \frac{(-1)^g \Gamma(8g+\frac43)}{12^{3g} \, g! \, \Gamma(g+\frac13)}  \lambda^{-\frac{24g+4}3} -\frac{1}{72}  \sum_{g\geq 0} \frac{ (-1)^g \Gamma(8g+\frac{16}3)}{12^{3g} \, g! \, \Gamma(g+\frac43)}  \lambda^{-\frac{24g+16}3}
$$
\end{minipage}}
\end{center}
and

\begin{center}
\resizebox{0.7\textwidth}{!}{
\begin{minipage}{1\textwidth}

$$
(-M_2)^1_1= -\frac16 \sum_{g\geq 0}\frac{(-1)^g \Gamma(8g+\frac{8}3)}{12^{3g} \, g! \, \Gamma(g+\frac23)} \lambda^{-\frac{24g+8}3}  -  \frac{1}{144} \sum_{g\geq 0}\frac{(-1)^g \Gamma(8g+\frac{20}3)}{12^{3g} \, g! \, \Gamma(g+\frac53)} \lambda^{-\frac{24g+20}3}
$$
$$
(-M_2)^1_2=\frac{1}{144}\sum_{g\geq 0}\frac{ (-1)^g \Gamma(8g+\frac{17}3)}{12^{3g} \, g! \, \Gamma(g+\frac53)} \lambda^{-\frac{24g+17}3} + \frac12 \sum_{g\geq 0}\frac{(-1)^g \Gamma(8g+\frac53)}{12^{3g} \, g! \, \Gamma(g+\frac23)} \lambda^{-\frac{24g+5}3}
$$
$$
(-M_2)^1_3=-\sum_{g\geq 0} \frac{(-1)^g \Gamma(8g+\frac23)}{12^{3g} \,g! \, \Gamma(g+\frac23)} \lambda^{-\frac{24g+2}3}
$$
$$
(-M_2)^2_1= - \frac{1}{144} \sum_{g\geq 0} \frac{(-1)^g \Gamma(8g+\frac{23} 3)}{12^{3g} \, g! \, \Gamma(g+\frac53)} \lambda^{-\frac{24g+23}3} - \sum_{g\geq 0} \frac{(-1)^g \Gamma(8g+\frac23)}{12^{3g} \, g! \, \Gamma(g+\frac23)} \lambda^{-\frac{24g-1}3}  - \frac16  \sum_{g\geq 0} \frac{(-1)^g \Gamma(8g+\frac{11}3)}{12^{3g} \, g! \, \Gamma(g+\frac23)} \lambda^{-\frac{24g+11}3}
$$
$$
(-M_2)^2_2= \frac13 \sum_{g\geq 0}\frac{(-1)^g 3^{6g} \Gamma(8g+\frac{8}3)}{108^{3g} \, g! \, \Gamma(g+\frac23)} \lambda^{-\frac{24g+8}3} 
$$
$$
(-M_2)^2_3=\frac{1}{144}\sum_{g\geq 0}\frac{ (-1)^g \Gamma(8g+\frac{17}3)}{12^{3g} \, g! \, \Gamma(g+\frac53)} \lambda^{-\frac{24g+17}3} - \frac12 \sum_{g\geq 0}\frac{(-1)^g \Gamma(8g+\frac53)}{12^{3g} \, g! \, \Gamma(g+\frac23)} \lambda^{-\frac{24g+5}3}
$$
$$
(-M_{2})^3_1=-\frac{1}{6}\sum_{g\geq 0}\frac{(-1)^g \Gamma(8g+\frac{14}3)}{12^{3g} \, g! \, \Gamma(g+\frac23)} \lambda^{-\frac{24g+14}3} + \frac{1}{144}\sum_{g\geq 0}\frac{(-1)^g \Gamma(8g+\frac{17}3)}{12^{3g} \, g! \, \Gamma(g+\frac53)} \lambda^{-\frac{24g+14}3}
$$
$$
(-M_{2})^3_2= - \frac{1}{144} \sum_{g\geq 0} \frac{(-1)^g \Gamma(8g+\frac{23} 3)}{12^{3g} \, g! \, \Gamma(g+\frac53)} \lambda^{-\frac{24g+23}3} - \sum_{g\geq 0} \frac{(-1)^g \Gamma(8g+\frac23)}{12^{3g} \, g! \, \Gamma(g+\frac23)} \lambda^{-\frac{24g-1}3} + \frac16  \sum_{g\geq 0} \frac{(-1)^g  \Gamma(8g+\frac{11}3)}{12^{3g} \, g! \, \Gamma(g+\frac23)} \lambda^{-\frac{24g+11}3}
$$
$$
(-M_{2})^3_3=-\frac16 \sum_{g\geq 0}\frac{(-1)^g \Gamma(8g+\frac{8}3)}{12^{3g} \, g! \, \Gamma(g+\frac23)} \lambda^{-\frac{24g+8}3}  +  \frac{1}{144} \sum_{g\geq 0}\frac{(-1)^g \Gamma(8g+\frac{20}3)}{12^{3g} \, g! \, \Gamma(g+\frac53)} \lambda^{-\frac{24g+20}3}.
$$
\end{minipage}}
\end{center}

\newpage

\noindent Marco Bertola

\noindent SISSA, via Bonomea 265, Trieste 34136, Italy

\noindent Department of Mathematics and
Statistics, Concordia University, 1455 de Maisonneuve W., Montr\'eal, Qu\'ebec,  H3G 1M8,
Canada

\noindent Centre de recherches math\'ematiques, Universit\'e de Montr\'eal, C.~P.~6128, succ. centre ville, Montr\'eal,
Qu\'ebec, H3C 3J7, Canada

\noindent  marco.bertola@\{sissa.it, concordia.ca\}

~~~~~
~~~~~

\noindent Boris Dubrovin

\noindent SISSA, via Bonomea 265, Trieste 34136, Italy

\noindent N. N. Bogolyubov Laboratory for Geometrical Methods in Mathematical Physics, Moscow State University, Moscow 119899, Russia

\noindent {\it Deceased on March 19, 2019}


~~~~~
~~~~~

\noindent Di Yang

\noindent Max-Planck-Institut f\"ur Mathematik, Vivatsgasse 7, Bonn 53111, Germany

\noindent {\it current address}: School of Mathematical Sciences, University of Science and Technology of China, 
Hefei 230026, P.R.~China

\noindent diyang@ustc.edu.cn

\end{document}